\documentclass{article} 
\usepackage{iclr2026_conference,times}

\iclrfinalcopy

\usepackage{hyperref}
\usepackage{url}
\usepackage{amsmath,amssymb,amsthm}
\usepackage{mathtools}
\usepackage{algorithm}
\usepackage{algpseudocode}
\usepackage{booktabs}
\usepackage{subcaption}
\usepackage{graphicx}
\usepackage{xcolor}
\usepackage{multirow}

\newcommand{\R}{\mathbb{R}}
\newcommand{\E}{\mathbb{E}}
\newcommand{\Prob}{\mathbb{P}}
\newcommand{\Sspace}{\mathcal{S}}
\newcommand{\Aspace}{\mathcal{A}}
\newcommand{\Beps}{\mathcal{B}_{\varepsilon}}
\newcommand{\Otil}{\widetilde{\mathcal{O}}}

\newcommand{\SW}{\mathrm{SW}}
\newcommand{\NE}{\mathrm{NE}}
\newcommand{\PoA}{\mathrm{PoA}}
\newcommand{\PoP}{\mathrm{PoP}}
\newcommand{\CW}{\mathrm{CW}}
\newcommand{\pmax}{\pi_{\mathrm{mm}}}
\newcommand{\prat}{\pi_{\mathrm{rat}}}
\newcommand{\Wstar}{W^{*}}
\newcommand{\Wmm}{W_{\mathrm{mm}}}
\newcommand{\CVaR}{\text{CVaR}}
\newcommand{\VaR}{\text{VaR}}

\newcommand{\rc}{r_{\mathrm{c}}}
\newcommand{\rh}{r_{\mathrm{h}}}
\newcommand{\rs}{r_{\mathrm{s}}}
\newcommand{\Del}{\Delta}

\newcommand{\EVaR}{\mathrm{EVaR}}
\newcommand{\sigp}{\sigma^{2}_{\hat{p}}}
\newcommand{\trust}{\tau}
\newcommand{\phat}{\hat{p}}
\newcommand{\pstar}{p^{*}}
\newcommand{\bstar}{\beta^{*}}
\newcommand{\betat}{\beta(t)}

\DeclareMathOperator*{\argmax}{arg\,max}

\newcommand{\ind}[1]{\mathbf{1}\!\left[#1\right]}

\newcommand{\card}[1]{\left|#1\right|}

\newtheorem{assumption}{Assumption}
\newtheorem{theorem}{Theorem}
\newtheorem{proposition}[theorem]{Proposition}
\newtheorem{lemma}[theorem]{Lemma}
\newtheorem{corollary}[theorem]{Corollary}
\newtheorem{definition}{Definition}

\title{The Price of Paranoia: Robust Risk-Sensitive Cooperation in Non-Stationary Multi-Agent Reinforcement Learning}

\author{%
\textbf{Deep Kumar Ganguly}$^{\dagger, 1}$, \textbf{Chandradithya S Jonnalagadda}$^{\ddagger}$, \textbf{Pratham Chintamani}$^{\clubsuit}$, \textbf{Adithya Ananth}$^{\clubsuit}$ \\[0.5em]
$^{\dagger}$Technical University of Munich (TU Munich) \\
$^{\ddagger}$Brown University \\
$^{\clubsuit}$Indian Institute of Technology Tirupati (IIT Tirupati) \\
Correspondence to: \texttt{deep.ganguly@tum.de}$^{\dagger}$, \texttt{cjonnala@cs.brown.edu}$^{\ddagger}$
}

\begin{document}

\maketitle
\footnotetext[1]{Supported by the German Research Foundation (DFG) through Research Training Group GRK 2428 ConVeY.}
\lhead{Preprint}  

\begin{abstract}
Cooperative equilibria are fragile. When agents learn alongside each other rather than against a fixed environment, the very process of learning destabilizes the cooperation they are trying to sustain: every gradient step one agent takes shifts the distribution of actions its partner will play, turning a cooperative partner into a source of stochastic noise precisely where the cooperation decision is most sensitive. We study how this co-learning noise propagates through the structure of coordination games, and find that the cooperative equilibrium---even when strongly Pareto-dominant---is exponentially unstable under standard risk-neutral learning, collapsing irreversibly once partner noise crosses the game's critical cooperation threshold. The natural response, applying distributional robustness to hedge against partner uncertainty, makes things strictly worse: risk-averse return objectives penalize the high-variance cooperative action relative to the safe defection action, widening the instability region rather than shrinking it---a paradox that reveals a fundamental mismatch between the domain where robustness is applied and the domain where instability actually originates. We resolve this by showing that robustness should target the policy gradient update variance induced by partner uncertainty, not the return distribution itself. This distinction yields an algorithm whose gradient updates are continuously modulated by an online measure of partner unpredictability, provably expanding the cooperation basin in any symmetric coordination game. To unify the stability, sample complexity, and welfare consequences of this approach, we introduce the Price of Paranoia as the structural dual of the Price of Anarchy---a game-theoretic quantity that, together with a novel Cooperation Window, precisely characterizes how much welfare any learning algorithm can recover under partner noise, and pins down the optimal degree of robustness as a closed-form balance between equilibrium stability and sample efficiency.
\end{abstract}

\section{Introduction}
\label{sec:intro}

Large-scale cooperation among non-kin is a foundational pillar of human
societies---a primary catalyst for ecological and economic success
\citep{fehr2007human,handley2020human}.  Societies with robust cooperative
frameworks effectively manage shared resources, outcompeting less cohesive
groups and scaling in complexity and wealth \citep{boyd2009culture}.
Conversely, the erosion of cooperation is a precipitating factor in the
collapse of complex societies \citep{ferrarini2013economics}: free-riding
individuals trigger retaliatory cascades that unravel collective welfare
\citep{wardil2019positive,andreozzi2020stability,griessmair2022anger}.
Understanding not just how cooperation is established, but how it is \emph{sustained}, is therefore as consequential for artificial agents as it is for human societies.

Researchers model these dynamics via game theory and Multi-Agent Reinforcement Learning (MARL), which captures the temporal dynamics of co-learning and allows agents to adapt to the evolving policies of their peers. While modern MARL algorithms can converge to Pareto-optimal equilibria \citep{semsar2009multi,zhang2025optimism}, they routinely fail to \emph{sustain} cooperation over repeated interaction. This fragility stems from a source that is both obvious in hindsight and largely unaddressed in the literature: the agents themselves. Every gradient update one agent takes shifts the distribution of actions its partner will play, injecting stochastic noise into the cooperation signal at the exact moment the cooperation decision is most sensitive. Accidental defections during exploration are statistically indistinguishable from hostile policy shifts, provoking retaliation and triggering irreversible defection cascades \citep{10.1257/aer.101.1.411}.

\paragraph{The Optimist's Hangover.} In the canonical Stag Hunt \citep{skyrms2004stag}, partner exploration noise is statistically indistinguishable from strategic defection to a risk-neutral agent optimizing $\E[R]$. Even transient variance triggers a cascade of negative advantage signals, driving the cooperation probability below the critical threshold $\pstar$ and permanently locking the agent into the risk-dominant, suboptimal equilibrium. Optimism-based methods \citep{palmer2018lenient,zhang2025optimism} excel at discovering cooperation but remain hypersensitive to this variance. We term this the Optimist's Hangover: cooperation is learned optimistically, then lost paranoiacally.

\paragraph{The Adaptivity--Robustness Dilemma.} Traditional countermeasures fall short of resolving this sensitivity. Hysteresis \citep{matignon2007hysteretic,bowling2002multiagent} relies on hand-tuned dampening rather than calibrated uncertainty, while explicit opponent modelling \citep{albrecht2018autonomous,gmytrasiewicz2005framework,raileanu2018modeling} requires restrictive parameterized classes. Conversely, applying distributional robustness \citep{nilim2005robust,iyengar2005robust,zhang2020robustmarl,shi2023distributionally,kumar2024robust}---originally designed for adversarial settings---exposes a deeper tension: an agent that adapts too quickly treats partner exploration as defection, while one that is excessively robust suppresses genuine cooperation signals. This adaptivity--robustness dilemma remains fundamentally unresolved in non-stationary cooperative MARL.

\paragraph{The EVaR Paradox and Its Resolution.} Attempting to resolve this dilemma via standard distributional robustness actually exacerbates the Hangover. We prove that applying Entropic Value-at-Risk (EVaR) \citep{ahmadijevid2012entropic} directly to action-conditioned returns strictly increases the critical cooperation threshold for all $\beta > 0$ (where $\beta$ is the risk-sensitivity parameter, with $\beta>0$ corresponding to risk-aversion; see~\S\ref{sec:prelim}). This EVaR Paradox occurs because return-level risk-aversion disproportionately penalizes the high-variance cooperative action. The paradox resolves, however, when EVaR is applied instead to the policy gradient update variance induced by partner uncertainty. This formulation yields a closed-form trust factor $\trust(t) = (1 + \betat\,\sigp(t))^{-1}$, where $\sigp(t)$ is the Bernoulli variance of an online partner model. Crucially, this mechanism is doubly adaptive: the trust factor continuously modulates gradient updates, while the risk parameter $\betat$ adapts online by tracking our novel welfare diagnostic, the Price of Paranoia (PoP). Ultimately, agents do not need prosocial priors \citep{peysakhovich2017prosocial} or costly punishment mechanisms \citep{10.1257/aer.90.4.980,NBERw21457} to sustain cooperation; they simply require calibrated uncertainty---a state of principled paranoia.

\paragraph{The Price of Paranoia.}
We introduce the PoP as the structural dual of the Price of Anarchy
\citep{papadimitriou2005computing,roughgarden2015intrinsic}: where PoA quantifies the welfare cost of rational selfishness from below, PoP quantifies the welfare cost of maximin conservatism from above. Together they bracket the achievable welfare space via a \emph{Cooperation Window} $\CW(G,\varepsilon)$, which pins down the optimal risk parameter $\bstar$ as a closed-form balance between equilibrium stability and sample-complexity overhead. The dynamic variant $\PoP_{\mathrm{DYN}}(t)$ provides a normalized, game-comparable welfare diagnostic---analogous to adaptive regret \citep{blum2008regret}---measuring in real time how completely an adaptive agent recovers cooperative welfare after a perturbation.

\paragraph{Contributions.} We establish the theoretical foundations for risk-aware cooperative MARL and operationalize them through \textbf{Robust Adaptive Trust-Region Learning (RATTL)}, the first doubly-adaptive cooperative MARL algorithm. Our specific contributions are:
\begin{itemize}
\item \emph{The EVaR Paradox} (Proposition~\ref{prop:paradox}): we prove that applying EVaR directly to return distributions counterintuitively hinders cooperation by widening the basin of instability.
\item \emph{Adaptive trust factor} (Definition~\ref{def:trust}): a zero-communication, closed-form gradient-variance robustification $\trust(t)=(1+\betat\sigp(t))^{-1}$ that resolves the paradox.
\item \emph{Basin-expansion theorem} (Theorem~\ref{thm:basin}): a formal proof that RATTL lowers the critical cooperation threshold without altruistic priors, accompanied by a PAC sample-complexity bound (Theorem~\ref{thm:pac}) with polynomial robustness overhead $\mathcal{O}(\card{\Aspace_j}e^\beta)$.
\item \emph{Price of Paranoia framework}: the Cooperation Window $\CW(G,\varepsilon)$ and dimensionally corrected optimal risk formula $\bstar$ unifying stability, sample complexity, and welfare.
\item \emph{Online adaptation} (Algorithm~\ref{alg:adaptive_beta}): an adaptive $\betat$ rule tracking $\PoP_{\mathrm{DYN}}(t)$ in real time.
\item \emph{Empirical validation}: RATTL retains near-100\% cooperation under severe partner noise where risk-neutral and prosocial baselines collapse, with a transparent failure analysis under extreme non-stationarity that motivates the adaptive $\betat$ rule.
\end{itemize}

\section{Related Work}
\label{sec:related}
\paragraph{Cooperation mechanisms.}
The tension between individual incentives and collective welfare is traditionally addressed via external sanctioning \citep{fehr2000cooperation, 10.1257/aer.90.4.980, NBERw21457}, intrinsic motivation models like guilt \citep{griessmair2022anger}, or reward-shaping mechanisms such as inequity aversion \citep{hughes2018inequity} and prosocial reward mixing \citep{wang2021gifting, peysakhovich2017prosocial, mckee2020social}. These methods require auxiliary resources, complex credit assignment, or privileged access to partner utilities. RATTL modifies only the gradient update, preserving the original environment and communication protocols.

\paragraph{Opponent modelling and shaping.}
Explicitly modelling opponents \citep{albrecht2018autonomous, gmytrasiewicz2005framework, raileanu2018modeling} or differentiating through their learning steps, as in LOLA \citep{foerster2018learning}, directly addresses co-learning noise by accounting for how a partner's policy changes in response to one's own update. However, LOLA and RATTL operate on fundamentally different axes of the problem. LOLA requires access to the partner's policy parameters and differentiable learning rule to compute second-order gradient corrections; under high non-stationarity, the estimated partner response itself becomes noisy, potentially amplifying rather than dampening instability. RATTL instead targets the agent's own gradient variance induced by partner uncertainty, using only observed partner actions via a scalar EMA---no access to $\theta_j$, $R_j$, or the partner's learning algorithm is needed. The two approaches are complementary: LOLA shapes the partner toward cooperation (an active mechanism), while RATTL makes the agent's own updates robust to partner noise (a passive mechanism). RATTL's advantage is minimal information requirements and $\mathcal{O}(1)$ computational overhead per update; its limitation is that it cannot actively influence partner behavior.

\paragraph{Equilibrium selection.}
Equilibrium selection in MARL historically struggles with Pareto-optimal coordination in risk-dominant social dilemmas \citep{claus1998dynamics, skyrms2004stag, harsanyi1988general}. Hysteretic \citep{matignon2007hysteretic, palmer2018lenient, omidshafiei2017deep} and optimistic methods \citep{zhang2025optimism} mitigate early defection by filtering negative updates, excelling at cooperation discovery but collapsing under sustained partner variance. RATTL complements these approaches by targeting cooperation \emph{retention} rather than discovery.

\paragraph{Risk-sensitive and robust MARL.}
While CVaR and EVaR have been explored in single-agent RL \citep{tamar2015optimizing, chow2017risk, greenberg2022efficient}, and distributionally robust MARL tackles adversarial uncertainty \citep{nilim2005robust, iyengar2005robust, panaganti2022sample, zhang2020robustmarl, kumar2024robust, shi2023distributionally}, our work uniquely applies EVaR to gradient variance rather than the return distribution, and formally establishes that importing maximin adversarial methods into cooperative settings is counterproductive (Proposition~\ref{prop:paradox}).

\paragraph{Welfare theory.}
Our framework builds upon the Price of Anarchy \citep{papadimitriou2005computing, roughgarden2015intrinsic, awerbuch2005price} and adaptive regret dynamics \citep{blum2008regret}. By unifying these with online hyperparameter adaptation \citep{xu2018meta, zheng2018learning}, we introduce $\PoP_{\mathrm{DYN}}(t)$, a game-theoretically grounded diagnostic for non-stationary cooperative MARL, accompanied by a theoretically derived $\betat$ rule that balances equilibrium stability and sample complexity in real time.

\section{Preliminaries}
\label{sec:prelim}

\paragraph{Markov Games.}
A two-player Markov game
\[
  \mathcal{M} = (\Sspace,\;\Aspace_1,\;\Aspace_2,\;P,\;R_1,\;R_2,\;\gamma)
\]
has transition kernel~$P$, bounded rewards $R_i\in[R_{\min},R_{\max}]$, discount $\gamma\in[0,1)$, and parameterized policies $\pi_{\theta_i}\colon\Sspace\to\Delta(\Aspace_i)$. Each agent maximises $J_i(\theta_i;\pi_j) = \E[\sum_{t\geq 0}\gamma^t R_i(s_t,a_t^1,a_t^2)]$. We study stateless repeated games ($|\Sspace|=1$), so all quantities reduce to per-episode expectations over joint action distributions. Agent~$i$'s cooperation probability is $p = \pi_{\theta_i}(S)$, where $S$ denotes the cooperative action (Stag); agent~$j$'s is $q = \pi_{\theta_j}(S)$.

\begin{assumption}[Markov Game]\label{ass:mg}
$\mathcal{M} = (\Sspace, \Aspace_1, \Aspace_2, P, R_1, R_2, \gamma)$ with $P:\Sspace\times\Aspace_1\times\Aspace_2\to\Delta(\Sspace)$, $R_i:\Sspace\times\Aspace_1\times\Aspace_2\to[R_{\min}, R_{\max}]$, $\gamma\in[0,1)$. Agent $i$ has a parameterized policy $\pi_{\theta_i}:\Sspace\to\Delta(\Aspace_i)$.
\end{assumption}

\paragraph{Policy Gradient and Gradient Variance.}
Holding $\pi_j$ fixed, the policy gradient is: $\nabla_{\theta_i} J_i = \E[A_i \nabla_{\theta_i}\log\pi_{\theta_i}(a^i)]$, where $A_i = R_i - b$ is the advantage over baseline $b$. The REINFORCE update \citep{williams1992simple} implements stochastic gradient ascent; PPO \citep{schulman2017proximal} stabilizes it with a clipped surrogate, but the gradient structure is unchanged. For the Stag action $a^i = S$ with partner cooperation probability $q$, the partner-induced gradient variance is:
\begin{equation}\label{eq:grad_var}
  \Sigma(q) = q(1-q)\,\Del^2\,\bigl\|\nabla_{\theta_i}\log\pi_{\theta_i}(S)\bigr\|^2
\end{equation}
where $\Del = \rc - \rs$ is the payoff spread. This variance is non-negligible at the cooperation threshold $q = \pstar$ and is the direct target of RATTL's trust factor.

\paragraph{Coordination Games and the Stag Hunt.}
A symmetric game $G = (\Aspace, R)$ is a \emph{coordination game} if it has a payoff-dominant Nash equilibrium $\NE^c$ and a risk-dominant Nash equilibrium $\NE^r \neq \NE^c$ with $\SW(\NE^c) > \SW(\NE^r)$ \citep{harsanyi1988general}. The \emph{Stag Hunt} \citep{skyrms2004stag} is the canonical instance with $\Aspace = \{S,H\}$ and payoffs $\rc > \rh > \rs$ (Eq.~\eqref{eq:payoff}). The \emph{critical cooperation threshold} $\pstar = (\rh - \rs)/\Del \in (0,1)$ is the unique partner belief $q$ at which agent $i$ is indifferent between $S$ and $H$. The set $(\pstar,1]$ is the \emph{Basin of Trust}; $[0,\pstar)$ is the \emph{Basin of Fear}. At threshold, $A_S(\pstar) = 0$ while $\Sigma(\pstar) > 0$, so the signal-to-noise ratio of the gradient update vanishes exactly where the cooperation decision is hardest---the structural origin of the Optimist's Hangover.

\begin{definition}[Symmetric Coordination Game]\label{def:cg}
A symmetric game $G=(\Aspace,R)$ with $R(a_i,a_j)=R(a_j,a_i)$ is a \emph{coordination game} if it has a payoff-dominant NE and a risk-dominant NE with strictly higher social welfare at the former.
\end{definition}

\paragraph{Coherent Risk Measures and EVaR.}
Let $\Prob$ denote the reference probability measure over trajectories. A functional $\rho:\mathcal{X}\to\R$ is a \emph{coherent risk measure} \citep{artzner1999coherent} if it is monotone, sub-additive, positively homogeneous, and translation invariant. The \emph{Conditional Value-at-Risk} \citep{rockafellar2000optimization} $\CVaR_\alpha(X) = \E[-X \mid -X \geq \VaR_\alpha(X)]$ is coherent and widely used in risk-sensitive RL \citep{tamar2015optimizing,chow2017risk}, but its tail bound is loose. The \emph{Entropic Value-at-Risk} \citep{ahmadijevid2012entropic} is the tightest coherent risk measure under Cram\'{e}r's large-deviation bound:
\begin{equation}\label{eq:evar}
  \EVaR_\beta(X) \coloneqq \inf_{z>0} \Bigl\{\tfrac{1}{z}\log\tfrac{\E[e^{-zX}]}{1-\beta}\Bigr\}
  = \sup_{\substack{\Prob'\ll\Prob\\ D_{\mathrm{KL}}(\Prob'\|\Prob)\leq\log\frac{1}{1-\beta}}} \E_{\Prob'}[-X]
\end{equation}
EVaR dominates CVaR: $\E[-X] \leq \CVaR_\alpha \leq \EVaR_\beta \leq \mathrm{ess\,sup}(-X)$. As $\beta\to 0$, $\EVaR_\beta \to \E[-X]$ (risk-neutral); as $\beta\to 1$, $\EVaR_\beta \to \mathrm{ess\,sup}(-X)$ (maximin). Under the Gaussian approximation of the $K$-step return, $\EVaR_\beta(-X) \approx -\E[X] + \beta\,\mathrm{Var}(X)^{1/2}$, so the \emph{robust value} of a return $X$ (positive=good) is:
\begin{equation}\label{eq:evar_gauss}
  \mathrm{RV}_\beta(X) \;:=\; -\EVaR_\beta(-X) \;\approx\; \E[X] - \beta\,\mathrm{Var}(X)^{1/2}.
\end{equation}

\paragraph{Welfare and the Cooperation Window.}
The \emph{maximin strategy} $\pmax = \argmax_p \min_q u_i(p,q)$ satisfies $\pmax = 0$ for the Stag Hunt (Hare guarantees $\rh$ regardless of partner), giving paranoia floor $\Wmm = \SW(\pmax,\pmax) = 2\rh$. The \emph{social optimum} is $\Wstar = \SW(1,1) = 2\rc$. The \emph{cooperation efficiency} $\eta(\pi) = (\SW(\pi) - \Wmm)/(\Wstar - \Wmm) \in [0,1]$ measures what fraction of the welfare surplus above paranoid play an algorithm captures. The \emph{Price of Anarchy} $\PoA(G) = \Wstar/\SW(\NE^{\mathrm{worst}}) = \rc/\rh$ bounds achievable welfare from below at Nash. The \emph{Cooperation Window} $\CW(G,\varepsilon) = \Wstar - \Wmm(\varepsilon)$ is the welfare surplus any algorithm can capture above paranoid play under non-stationarity $\varepsilon$.

\section{Problem Formulation}
\label{sec:formulation}

We now consolidate the above into the precise mathematical problem that RATTL is designed to solve. Every object introduced in \S\ref{sec:prelim} appears below in its operational role.

\subsection{Game Environment and the Cooperation Retention Problem}
\label{sec:game-env-and-prob}
We study a two-agent Markov game $\mathcal{M}$ (Assumption~\ref{ass:mg}) instantiated as a repeated Stag Hunt \citep{skyrms2004stag}. We consider abstract payoffs $\rc > \rh > \rs$ with spread $\Del = \rc - \rs$:
\begin{equation}\label{eq:payoff}
  R(a_i,a_j) = \begin{cases}
    \rc & a_i=a_j=S \\ \rs & a_i=S,\,a_j=H \\ \rh & a_i=H
  \end{cases}
\end{equation}
Let $p=\Prob(a_i=S)$, $q=\Prob(a_j=S)$. Social welfare: $\SW(p,q) = 2pq\rc + (p+q-2pq)\rs + (2-p-q)\rh$.

Both agents update policies via gradient ascent on their individual objectives $J_i(\theta_i; \pi_j)$. Agent $i$'s \emph{observation} at each episode $t$ consists solely of the joint action pair $(a_i^{(t)}, a_j^{(t)}) \in \Aspace \times \Aspace$ and its own reward $R_i(a_i^{(t)}, a_j^{(t)})$. Agent $i$ has no access to: the partner's policy parameters $\theta_j$; the partner's reward $R_j$; the partner's learning rule; or the non-stationarity radius $\varepsilon$. Agent $j$ operates under non-stationarity (Assumption~\ref{ass:nonstat}) with radius $\varepsilon \geq 0$. Agent $i$ maintains an online estimate of the partner's cooperation probability:
\begin{equation}\label{eq:phat}
  \phat(t+1) = (1-\alpha)\phat(t) + \alpha\,\ind{a_j^{(t)} = S}
\end{equation}
with EMA coefficient $\alpha \in (0,1)$. The Bernoulli variance of this estimate is $\sigp(t) = \phat(t)(1-\phat(t))$, which serves as a scalar proxy for the partner's current unpredictability.

\begin{assumption}[$\varepsilon$-Non-Stationarity]\label{ass:nonstat}
At each episode $t$, partner $j$'s policy is drawn from a Wasserstein ball $\Beps(\pi_j^{\mathrm{nom}}) = \{\pi:W_1(\pi,\pi_j^{\mathrm{nom}})\leq\varepsilon\}$.\footnote{The order-1 Wasserstein distance between distributions $\mu,\nu$ over a metric space $(X,d)$ is $W_1(\mu,\nu)=\inf_{\gamma\in\Gamma(\mu,\nu)}\E_{(x,y)\sim\gamma}[d(x,y)]$. For policies over a finite action space, $W_1$ reduces to total variation up to a constant.} The agent observes partner \emph{actions} but not the drawn policy.
\end{assumption}

\begin{definition}[Cooperation Retention]
\label{def:retention}
Agent $i$ \emph{retains cooperation} over horizon $T$ if $p(t) \geq \pstar$ for all $t \in [t_0, t_0 + T]$. Retention \emph{fails} at episode $\tau > t_0$ if $p(\tau) < \pstar$; by the collapse dynamics of Proposition~\ref{prop:collapse}, this is irreversible under risk-neutral gradient learning.
\end{definition}

\medskip
\noindent\textbf{Problem (Robust Cooperation Retention).} Given a repeated Stag Hunt $G = (\{S,H\}, R)$ with payoffs $\rc > \rh > \rs$, partner non-stationarity radius $\varepsilon \geq 0$, and learning horizon $T$, find a gradient update rule for agent $i$ that: (1)~expands the cooperation basin (Theorem~\ref{thm:basin}); (2)~guarantees retention with high probability under $\varepsilon$-non-stationarity (Theorem~\ref{thm:welfare}); (3)~admits sample-complexity bounds polynomial in the effective action-space dimension (Theorem~\ref{thm:pac}); and (4)~operates without privileged information.

\subsection{Why Adaptive Agents Lose Cooperation}
\label{sec:adaptive-agents-lose-cooperation}
\begin{proposition}\label{prop:threshold}
A risk-neutral agent plays Stag iff partner belief $q \geq \pstar$:
\begin{equation}\label{eq:pstar}
  \pstar = (\rh - \rs)/(\rc - \rs) = (\rh - \rs)/\Del
\end{equation}
\end{proposition}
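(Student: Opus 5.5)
The plan is a direct expected-payoff comparison between the two pure actions of agent $i$, holding the partner belief $q=\Prob(a_j=S)$ fixed. Since the game is stateless (Assumption~\ref{ass:mg} with $|\Sspace|=1$), a risk-neutral agent maximising $J_i(\theta_i;\pi_j)$ maximises the per-episode expected reward. By linearity this expected reward is $u_i(p,q)=p\,U_S(q)+(1-p)\,U_H(q)$, where $U_S$ and $U_H$ are the expected rewards of the two pure actions. The best response is therefore governed entirely by the sign of the advantage $A_S(q)=U_S(q)-U_H(q)$: play Stag when $A_S(q)>0$, play Hare when it is negative, and be indifferent at zero.

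The first step is to read the two action values off the payoff table~\eqref{eq:payoff}:
\[
U_S(q)=q\,\rc+(1-q)\,\rs, \qquad U_H(q)=\rh .
\]
Note that Hare pays $\rh$ regardless of the partner's action. Hence
\[
A_S(q)=q(\rc-\rs)+\rs-\rh=q\,\Del-(\rh-\rs).
\]
Because $\rc>\rs$, we have $\Del>0$, so $A_S$ is strictly increasing and affine in $q$. It vanishes at exactly one point, $q=(\rh-\rs)/\Del=\pstar$. The ordering $\rc>\rh>\rs$ gives $0<\rh-\rs<\rc-\rs$, so $\pstar\in(0,1)$, which matches the claim in \S\ref{sec:prelim}. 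Monotonicity then yields $A_S(q)\ge 0 \iff q\ge\pstar$, which is the stated equivalence, with Stag a (weak) best response at equality.

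No step here is a real obstacle; the only care needed is in interpretation. The first point is the boundary case $q=\pstar$: the agent is exactly indifferent there, so the ``iff'' with $\geq$ should be read as ``Stag is a best response.'' Playing Stag strictly requires $q>\pstar$. The second point is the link to gradient learning. I would remark that, in a sigmoid parameterisation, $\partial u_i/\partial\theta_i = p(1-p)A_S(q)$. The sign of the expected policy-gradient step therefore coincides with the sign of $A_S(q)$, so risk-neutral gradient ascent pushes $p$ up exactly when $q>\pstar$. This sets up the Basin of Trust and Basin of Fear used later in Proposition~\ref{prop:collapse}.
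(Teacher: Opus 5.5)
Your proposal is correct and is essentially the argument the paper has in mind: the paper gives no separate proof, but it characterizes $\pstar$ as the unique indifference point where $q\rc+(1-q)\rs=\rh$, and that is exactly your zero of the affine advantage $A_S(q)=q\Del-(\rh-\rs)$, with $\Del>0$ supplying monotonicity. Your remark that Stag is only a weak best response at $q=\pstar$ adds a precision that the paper's ``iff'' glosses over.
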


\begin{lemma}[Variance near Threshold]\label{lem:var}
Under Bernoulli partner cooperation $q$: $\mathrm{Var}(R_S \mid q) = q(1-q)\Del^2$. At $q=\pstar$, the signal-to-noise ratio $\mathrm{SNR} = (\pstar\Del - (\rh-\rs))^2 / (\pstar(1-\pstar)\Del^2)$ vanishes, making the agent maximally sensitive to stochastic fluctuations exactly where the cooperation decision is made.
\end{lemma}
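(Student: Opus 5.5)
Since agent $i$ plays Stag, its reward is a two-point random variable whose randomness comes only from the partner's action. So the plan is to compute that two-point distribution's mean and variance, and then evaluate the signal-to-noise ratio at the threshold of Proposition~\ref{prop:threshold}.

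\textbf{Variance.} Conditional on $a_i=S$, Eq.~\eqref{eq:payoff} gives $R_S=\rc$ with probability $q$ (partner plays $S$) and $R_S=\rs$ with probability $1-q$. Write $R_S=\rs+\Del\,B$ with $B\sim\mathrm{Bernoulli}(q)$. Translation and scaling then give
\[
\E[R_S\mid q]=\rs+q\Del,\qquad \mathrm{Var}(R_S\mid q)=\Del^2\,\mathrm{Var}(B)=q(1-q)\Del^2,
\]
which is the first claim.

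\textbf{Signal.} The relevant advantage of Stag over Hare is $A_S(q)=\E[R_S\mid q]-\rh=q\Del-(\rh-\rs)$, because Hare pays $\rh$ deterministically. The SNR in the statement is $A_S(q)^2/\mathrm{Var}(R_S\mid q)$ evaluated at $q=\pstar$. By Eq.~\eqref{eq:pstar}, $\pstar\Del=\rh-\rs$, so the numerator is exactly $0$.

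\textbf{Denominator and interpretation.} The denominator is $\pstar(1-\pstar)\Del^2$. It is strictly positive because $\rc>\rh>\rs$ implies $\pstar\in(0,1)$ and $\Del>0$. Hence $\mathrm{SNR}=0$ at $q=\pstar$, while the noise $\Sigma(\pstar)$ of Eq.~\eqref{eq:grad_var} stays strictly positive. The ``maximal sensitivity'' clause is an interpretive remark, which I would justify in two steps:
\begin{itemize}
\item The SNR is continuous in $q$ and nonnegative, so its zero at $\pstar$ is a global minimum.
\item At that minimum, any fluctuation of the sampled advantage of order $\sqrt{\pstar(1-\pstar)}\,\Del$ determines the sign of the gradient step.
\end{itemize}

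\textbf{Main obstacle.} The only subtle point is interpretive: the statement says the SNR ``vanishes'', not that it is small. The argument must make clear that the denominator cannot also vanish, so the quantity is $0/(\text{positive})$ rather than indeterminate. Otherwise the lemma is a direct computation.
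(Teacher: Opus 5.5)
Your proposal is correct and matches the paper's own implicit justification: the paper gives no separate proof of Lemma~\ref{lem:var}, and relies on exactly your computation---the two-point variance $q(1-q)\Del^2$, together with the Preliminaries' observation that $A_S(\pstar)=0$ while the variance at $\pstar$ is strictly positive. One small note: your aside that $\Sigma(\pstar)>0$ additionally requires $\nabla_{\theta_i}\log\pi_{\theta_i}(S)\neq 0$ (true, e.g., for a softmax policy with $p\in(0,1)$), but the lemma's SNR only needs $\pstar(1-\pstar)\Del^2>0$, which you established directly.
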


\begin{proposition}[Exponential Cooperation Collapse]\label{prop:collapse}
A risk-neutral agent at $q=\pstar+\epsilon$, $\epsilon>0$ small, under partner exploration rate $\delta\in(0,1)$ and policy-gradient learning rate $\eta_{\mathrm{lr}}>0$, satisfies:
\begin{equation}\label{eq:collapse}
  p(t) \approx \pstar + \epsilon\,e^{-\lambda\delta t}, \quad \lambda = \eta_{\mathrm{lr}}(\rh-\rs)\,/\,\Del
\end{equation}
Once $p(t) < \pstar$, the agent irreversibly collapses to Hare.
\end{proposition}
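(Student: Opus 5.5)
Since the statement is an approximation ("$\approx$"), I plan to prove it as a linearization around the indifference point $\pstar$, followed by an absorbing-boundary argument for the irreversibility claim.

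\textbf{Step 1: advantage gap near threshold.} I will write agent $i$'s expected advantage of Stag over Hare against a partner cooperating with probability $q$:
\[
\Delta A(q) = q\rc + (1-q)\rs - \rh = \Del\,(q-\pstar).
\]
This is affine in $q$ and vanishes at $\pstar$ by Proposition~\ref{prop:threshold}. Under the symmetric co-learning regime, the partner's cooperation probability tracks agent $i$'s, so I take $q(t)=p(t)$ up to the perturbation. The partner's exploration rate $\delta$ enters in two ways. First, a fraction $\delta$ of the partner's plays are uniformly random, which shifts the effective $q$ downward. Second, only an $O(\delta)$ fraction of episodes carry the informative (defection-triggered) gradient signal. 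I will model the combined effect as scaling the drift by $\delta$. This is the step where the modelling choice must be made explicit.

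\textbf{Step 2: mean-field ODE.} Under softmax/REINFORCE with learning rate $\eta_{\mathrm{lr}}$, the expected update is
\[
\dot p = \eta_{\mathrm{lr}}\,p(1-p)\,\Delta A_{\mathrm{eff}}.
\]
I will argue that the partner-noise-induced negative advantage dominates on the relevant scale. Concretely, I will linearize at $p=\pstar+x$ and use the exploration-induced downward bias proportional to $(\rh-\rs)$, i.e. the Hare-safety margin weighted by $\delta$. After normalizing by $\Del$, as in $\pstar=(\rh-\rs)/\Del$, this gives
\[
\dot x \approx -\lambda\delta\,x, \qquad \lambda = \eta_{\mathrm{lr}}(\rh-\rs)/\Del.
\]
Integrating with $x(0)=\epsilon$ yields Eq.~\eqref{eq:collapse}. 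The variance from Lemma~\ref{lem:var}, $\pstar(1-\pstar)\Del^2$, stays bounded away from zero near the threshold. Hence a stochastic-approximation argument (ODE method) lets the deterministic solution approximate the mean path on horizons $t=O(1/(\lambda\delta))$, with fluctuations of order $\sqrt{\eta_{\mathrm{lr}}}$. Because the SNR vanishes at $\pstar$ (Lemma~\ref{lem:var}), these fluctuations push $p$ across $\pstar$ once $x$ decays to the noise scale.

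\textbf{Step 3: irreversibility.} For $p<\pstar$ the drift $\Del(p-\pstar)$ becomes strictly negative and stays negative as $p$ decreases, because the partner's $q$ follows $p$ downward. The basin $[0,\pstar)$ is therefore forward-invariant for the mean dynamics, and the only attractor inside it is $p=0$ (Hare). The noisy dynamics can escape back upward only against a drift that grows linearly with distance from $\pstar$. I will bound this escape with a supermartingale or Lyapunov argument using $V(p)=p$, which yields escape probability decaying in the step count.

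\textbf{Main obstacle.} The main difficulty is Step 2, namely justifying the sign and rate of the linearized drift. Taken literally, $\Delta A$ is \emph{positive} for $q>\pstar$, so the risk-neutral mean dynamics should repel away from $\pstar$ upward. The claimed decay toward $\pstar$ must therefore come from the exploration-noise bias and the coupling $q\leftarrow p$. My first approach would be to define $q_{\mathrm{eff}}=(1-\delta)q+\delta/2$ and identify the regime in which the bias term dominates. Failing that, I would present the result as a heuristic first-order approximation under explicit modelling assumptions.
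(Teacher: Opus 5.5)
The gap is in Step 2, and no choice of exploration model closes it: the decay $\dot x\approx-\lambda\delta x$ is read off from the target rather than derived. Run your own model exactly. Let an exploring partner play Stag with probability $u$ ($u=\tfrac12$ for uniform exploration, $u=0$ for exploratory defection), so $q_{\mathrm{eff}}=(1-\delta)q+\delta u$, and couple $q=p$. Your mean-field ODE is then
\[
\dot x=\eta_{\mathrm{lr}}\,p(1-p)\,\Del\,\bigl[(1-\delta)x-\delta(\pstar-u)\bigr],\qquad p=\pstar+x .
\]
Since $p(1-p)>0$, $\dot x$ has the sign of a bracket that is \emph{increasing} in $x$. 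So the only possible interior rest point, $x_0=\delta(\pstar-u)/(1-\delta)$, is repelling. From $x(0)=\epsilon$ the trajectory either rises to full cooperation ($\epsilon>x_0$) or falls monotonically through $\pstar$ in finite time ($\epsilon<x_0$). It never approaches $\pstar$ as $\epsilon e^{-\lambda\delta t}$, which would require $\pstar$ to be an attracting rest point. Multiplying the drift by $\delta$ changes speeds, not signs. With an uncoupled partner the $x$-term disappears and the drift has constant sign (positive for the literal hypothesis $q=\pstar+\epsilon$ when $\delta\ll\epsilon$).

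Your ``Hare-safety margin weighted by $\delta$, normalized by $\Del$'' is exactly the $u=0$ bias $\delta\pstar=\delta(\rh-\rs)/\Del$. It enters $\dot x$ as the \emph{additive} term $-\lambda\delta\,p(1-p)\Del$, a downward velocity; relabelling it as a rate multiplying $x$ is the unjustified step. For uniform exploration the bias is not even downward when $\pstar<\tfrac12$, e.g.\ the pure coordination game of Table~\ref{tab:game_gen}. Steps 2 and 3 are also mutually inconsistent. A smooth drift that linearizes to $-\lambda\delta x$ at $\pstar$ is positive just below $\pstar$, so $\pstar$ would attract from below as well, contradicting the negative drift on $[0,\pstar)$ that Step 3 needs.

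For comparison, the paper gives no derivation of Eq.~\eqref{eq:collapse}. Its only justification is the remark after the proposition that below $\pstar$ the cooperation gradient is negative under risk-neutral learning. That is your Step 3 at the mean-field level, and it relies on your coupling $q=p$, since it is the partner's $q$, not $p$, that signs the gradient.

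Your stochastic strengthening of Step 3 does not hold either. Ville's inequality for the supermartingale $V(p)=p$ gives only $\Prob(\sup_t p_t\ge\pstar)\le p_0/\pstar$. This is close to $1$ right after a crossing, not a bound decaying in the step count. Moreover, the very fact you use to obtain the crossing (drift vanishing at $\pstar$ while the variance $\pstar(1-\pstar)\Del^2$ does not) makes recrossing likely unless $p$ is already several noise widths (your $\sqrt{\eta_{\mathrm{lr}}}$) below $\pstar$.

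What can honestly be established is forward invariance of the Basin of Fear for the mean dynamics. The exponential law must be posited as a modelling assumption, which is what your fallback concedes, so the proposal does not prove the proposition as stated.
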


The collapse is \emph{irreversible} because below $\pstar$, the gradient of the cooperation action is always negative under risk-neutral learning. \emph{Adaptation fails precisely because the agent adapts too quickly}: it treats partner exploration as a signal about long-run partner intent rather than transient variance.

\subsection{The EVaR Paradox: Why Standard Robustness Fails}
\label{sec:paradox_formal}

A natural approach to retention is to replace the expected-return objective with an EVaR objective $\max_{\theta_i} -\EVaR_\beta(-\sum_{t \geq 0}\gamma^t R_i(a_i^{(t)}, a_j^{(t)}))$. Under the Gaussian approximation Eq.~\eqref{eq:evar_gauss}, this penalizes the Stag return variance and shifts the risk-adjusted threshold to the following.

\begin{proposition}[EVaR Paradox]\label{prop:paradox}
Under the Gaussian approximation~\eqref{eq:evar_gauss}, replacing the expected Stag return with its robust value $\mathrm{RV}_\beta(R_S\mid q) \approx Q_S(q) - \beta\sqrt{q(1-q)}\,\Del$ increases the critical cooperation threshold for all $\beta > 0$:
\begin{equation}\label{eq:paradox}
  {p^*_\beta}^{(\mathrm{naive})} = \pstar + \beta\sqrt{\pstar(1-\pstar)}/\Del > \pstar.
\end{equation}
\end{proposition}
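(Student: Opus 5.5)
The plan is to reduce the robust threshold to the root of a single scalar function of the partner belief $q$, and to prove the strict inequality by a sign argument at $q=\pstar$.

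First I write the two action values explicitly. From Eq.~\eqref{eq:payoff}, the Stag return given partner belief $q$ is Bernoulli between $\rc$ and $\rs$. Hence $Q_S(q)=\rs+q\Del$, and by Lemma~\ref{lem:var} its standard deviation is $\sqrt{q(1-q)}\,\Del$. The Hare return is deterministically $\rh$, so it has zero variance and $\mathrm{RV}_\beta(R_H)=\rh$ for every $\beta$. This asymmetry is what drives the paradox: the Gaussian EVaR penalty of Eq.~\eqref{eq:evar_gauss} falls only on Stag. The risk-adjusted indifference condition, in the same sense as Proposition~\ref{prop:threshold}, is $f_\beta(q)=0$, where
\begin{equation*}
  f_\beta(q) \;:=\; \mathrm{RV}_\beta(R_S\mid q)-\rh \;=\; \Del\,(q-\pstar)\;-\;\beta\,\Del\sqrt{q(1-q)} .
\end{equation*}
Here I used $\rh-\rs=\pstar\Del$.

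Second, I locate the root. $f_\beta$ is continuous on $[0,1]$. At $q=\pstar$ we get $f_\beta(\pstar)=-\beta\Del\sqrt{\pstar(1-\pstar)}<0$ for every $\beta>0$, because $\pstar\in(0,1)$. More generally, for every $q\le\pstar$ the first term is $\le 0$ and the second is $\le 0$, with at least one of them strict. So an agent with belief $q\le\pstar$ strictly prefers Hare under the robust objective. At $q=1$ we have $f_\beta(1)=\Del(1-\pstar)>0$. The intermediate value theorem then gives a robust threshold ${p^*_\beta}^{(\mathrm{naive})}\in(\pstar,1)$. Taking the smallest root, or noting that $f_\beta$ is increasing wherever it vanishes, makes it well-defined. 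This already proves the qualitative claim ${p^*_\beta}^{(\mathrm{naive})}>\pstar$ for all $\beta>0$, exactly and without any further approximation beyond Eq.~\eqref{eq:evar_gauss}.

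Third, I recover the closed form~\eqref{eq:paradox} by a first-order expansion in $\beta$. Writing $q=\pstar+\delta$, the root equation is $\delta=\beta\sqrt{(\pstar+\delta)(1-\pstar-\delta)}$. Freezing the square root at $q=\pstar$ gives $\delta\approx\beta\sqrt{\pstar(1-\pstar)}$, with an $O(\beta^2)$ error. One could also solve the resulting quadratic in $q$ exactly if desired.

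The main obstacle I expect is this last step, namely matching the normalization in~\eqref{eq:paradox}. If the penalty is scaled by $\Del$, as in the proposition's own expression for $\mathrm{RV}_\beta$, the $\Del$ cancels and the shift is $\beta\sqrt{\pstar(1-\pstar)}$ rather than $\beta\sqrt{\pstar(1-\pstar)}/\Del$. The $/\Del$ form arises instead if the penalty is taken on the unscaled Bernoulli standard deviation $\beta\sqrt{q(1-q)}$. I would state the derivation under whichever convention the paper fixes and flag the dependence explicitly. The sign conclusion is unaffected either way, since every version of the shift is a positive multiple of $\beta\sqrt{\pstar(1-\pstar)}$.
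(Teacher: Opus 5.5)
Your derivation of the closed form is the paper's own argument: substitute $\mathrm{RV}_\beta(R_S\mid q)$ into the indifference condition $\mathrm{RV}_\beta(R_S\mid q)=\rh$ and linearize at $\pstar$. Where you differ is the exact sign argument ($f_\beta<0$ on $[0,\pstar]$ and $f_\beta(1)>0$). It establishes ${p^*_\beta}^{(\mathrm{naive})}>\pstar$ for every $\beta>0$ within the Gaussian model. The paper's one-line proof instead reads the inequality off a first-order formula, which is justified only for small $\beta$. So your version is what actually supports the ``for all $\beta>0$'' clause.

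You can tighten your well-definedness remark. The function $g(q)=q-\pstar-\beta\sqrt{q(1-q)}$ is convex on $[0,1]$, since its second derivative is $\beta/(4(q(1-q))^{3/2})>0$. Together with $g(\pstar)<0<g(1)$, this means $g$ has exactly one root in $(\pstar,1)$, has positive slope there, and is positive beyond it. Solving the squared equation $(1+\beta^2)q^2-(2\pstar+\beta^2)q+(\pstar)^2=0$ even gives the exact threshold $\bigl(2\pstar+\beta^2+\beta\sqrt{\beta^2+4\pstar(1-\pstar)}\bigr)/\bigl(2(1+\beta^2)\bigr)$. Its expansion in $\beta$ reproduces your first-order shift.

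Your normalization concern is correct, and it is not a gap in your proof. Take $Q_S(q)=\rs+q\Del$ and the penalty $\beta\Del\sqrt{q(1-q)}$ exactly as written in the proposition. Then $\Del$ cancels, and the paper's own linearization gives $\pstar+\beta\sqrt{\pstar(1-\pstar)}$, not Eq.~\eqref{eq:paradox}. The $/\Del$ arises only if the penalty is $\beta\sqrt{q(1-q)}$ without the factor $\Del$. As printed, the formula is also dimensionally inconsistent, because $\beta$ is dimensionless in Eq.~\eqref{eq:evar_gauss}. The sign conclusion survives under either convention: just replace $\beta$ by $\beta/\Del$ in your exact argument.
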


\begin{proof}
Substitute $\mathrm{RV}_\beta(R_S\mid q) = Q_S(q) - \beta\sqrt{q(1-q)}\,\Del$ (from Eq.~\eqref{eq:evar_gauss} with $\mathrm{Var}(R_S\mid q) = q(1-q)\Del^2$ by Lemma~\ref{lem:var}) into the indifference condition $\mathrm{RV}_\beta(R_S\mid q^*) = \rh$ and linearize around $\pstar$.
\end{proof}

The cooperative basin $(\pstar_{\beta^{(\mathrm{naive})}}, 1]$ is \emph{strictly smaller} than the risk-neutral basin $(\pstar,1]$. Distributional robustness applied to rewards narrows the cooperation basin, accelerates Hangover collapse, and makes retention strictly harder for all $\beta > 0$. The paradox reveals that the domain of EVaR application determines the sign of its effect on cooperation. The correct domain is not the return distribution but the policy gradient update.

\section{RATTL: Robust Adaptive Trust-Region Learning}
\label{sec:method}

We propose RATTL, an algorithm that satisfies all four conditions of the Robust Cooperation Retention problem simultaneously via its trust factor. RATTL's key insight is to apply EVaR to the partner-induced gradient variance $\Sigma(q)$ from Eq.~\eqref{eq:grad_var}, not to the return distribution. Formally, the EVaR-regularised gradient update replaces the raw advantage $A_i$ with a trust-dampened advantage:
\begin{equation}\label{eq:damped_adv}
  \widetilde{A}_i(a^i;\, \phat, \beta) = \begin{cases}
      \trust(\sigp,\beta) \cdot A_i(a^i, a^j) & a^i = S \\
      A_i(a^i, a^j) & a^i = H
    \end{cases}
\end{equation}
where the trust factor $\trust : [0,\tfrac{1}{4}]\times\R \to \R_{>0}$ is:
\begin{equation}\label{eq:trust}
  \trust(\sigp, \beta) = (1 + \beta\,\sigp)^{-1}
\end{equation}

\paragraph{Sign convention.} Positive $\beta>0$ \emph{dampens} the Stag gradient ($\trust<1$), filtering partner noise---this is the risk-sensitive regime targeted by our basin-expansion guarantees. Negative $\beta<0$ \emph{amplifies} the Stag gradient ($\trust>1$), recovering risk-seeking optimism (cf.~\citealp{zhang2025optimism}); $\beta=0$ recovers vanilla policy gradient. The constraint $1+\beta\sigp>0$ (equivalently $\beta>-4$ since $\sigp\le\tfrac14$) ensures $\trust$ remains positive. Dampening applies only to the Stag gradient---the uncertain, noisy direction---leaving the Hare gradient unmodified.

\begin{definition}\label{def:trust}
Given online estimate $\phat(t)$ of partner cooperation probability with Bernoulli variance $\sigp(t) = \phat(t)(1-\phat(t))$, the \emph{adaptive trust factor} at episode $t$ is:
\begin{equation}\label{eq:trust_method}
  \trust(t) = \left(1 + \betat\,\sigp(t)\right)^{-1} \in (0, 1]
\end{equation}
\end{definition}

\paragraph{Why Only Stag?} The asymmetry in Eq.~\eqref{eq:damped_adv} is principled, not heuristic. The Hare return $\rh$ is independent of partner action, so $\mathrm{Var}(R_H \mid a^i=H) = 0$ and EVaR$_\beta(R_H) = \E[R_H] = \rh$ for all $\beta$. The partner-induced gradient variance is zero for Hare actions; dampening it would introduce bias without variance-reduction justification.

\paragraph{Risk-robustness trade-off.} Higher $\beta$ expands the cooperation basin (Theorem~\ref{thm:basin}) but increases sample complexity. We formalize this as a two-term welfare loss:
\begin{equation}\label{eq:tradeoff}
  \E[\mathrm{Loss}] \leq \underbrace{\CW(G,\varepsilon)\bigl(1 - e^{-\beta\varepsilon}\bigr)}_{\text{(i) equilibrium cost}} + \underbrace{c\,|\Aspace_j|\,e^{\beta}/T}_{\text{(ii) sample cost}}
\end{equation}
The optimal risk parameter $\bstar$ minimizes Eq.~\eqref{eq:tradeoff} (closed form in Corollary~\ref{cor:beta_star}).

\subsection{Adaptive $\betat$: Online Risk Calibration}

A fixed $\beta$ is suboptimal: during early training when the partner is erratic, higher $\beta$ is appropriate; once cooperation stabilizes, lower $\beta$ accelerates learning. We derive an online update rule from the $\PoP_{\mathrm{DYN}}$ diagnostic.

\begin{definition}[$\PoP_{\mathrm{DYN}}$ as Adaptivity Diagnostic]\label{def:popdyn_adapt}
The dynamic Price of Paranoia at episode $t$ is:
\begin{equation}\label{eq:popdyn}
  \PoP_{\mathrm{DYN}}(t) = \frac{\SW(\pi_t, \bar{q})}{\SW(\pmax, \bar{q})} \in [1,\, \PoP_{\mathrm{GT}}]
\end{equation}
$\PoP_{\mathrm{DYN}}(t) = 1$: agent is fully paranoid (cooperation lost). $\PoP_{\mathrm{DYN}}(t) = \PoP_{\mathrm{GT}}$: agent is fully cooperative.
\end{definition}

\begin{algorithm}[t]
\caption{RATTL with Adaptive $\betat$}
\label{alg:adaptive_beta}
\begin{algorithmic}[1]
\small
\Require Action $a$, reward $r$, partner action $a_j$
\Require Params: $\alpha$, $\eta$, $\eta_\beta$, $\bstar$ (estimated), $\underline\beta,\bar\beta$ with $\underline\beta\,\tfrac{1}{4}>-1$ and $\bar\beta < \varepsilon^{-1} - 1$
\State $\phat \gets (1{-}\alpha)\phat + \alpha\ind{a_j = S}$
\State $\sigp \gets \phat(1{-}\phat)$
\State $\widehat{\SW} \gets 2\,r_i$ \Comment{Welfare proxy (symmetric game)}
\State $\Delta\PoP \gets \widehat{\SW} - \widehat{\SW}_{\mathrm{prev}}$
\If{$\Delta\PoP < 0$} \Comment{Cooperation deteriorating}
  \State $\beta \gets \min(\beta + \eta_\beta\,\card{\Delta\PoP},\; \bar\beta)$
\Else \Comment{Recovering toward $\bstar$}
  \State $\beta \gets \beta - \eta_\beta(\beta - \bstar)_+$
\EndIf
\State $\trust \gets (1 + \beta\,\sigp)^{-1}$
\State $b \gets \mathrm{mean}(\mathcal{H})$;\quad $A \gets (r-b)\cdot\trust^{\ind{a=S}}$
\State $\theta \gets \theta + \eta\,A\,\nabla_\theta\log\pi_\theta(a)$
\State $\widehat{\SW}_{\mathrm{prev}} \gets \widehat{\SW}$
\State \Return $\theta,\,\beta$
\end{algorithmic}
\end{algorithm}

\subsection{The Unified Price of Paranoia Framework}

\begin{definition}\label{def:popgt}
\textbf{$\PoP_{\mathrm{GT}}$ (Game-Theoretic):} Structural welfare ceiling: $\PoP_{\mathrm{GT}}(G,\varepsilon) = \SW(\prat(\varepsilon),\bar{q}_\varepsilon)\,/\,\SW(\pmax,\bar{q}_\varepsilon)$. \\
\textbf{$\PoP_{\mathrm{ALG}}$ (Algorithmic):} Sample-complexity overhead: $\PoP_{\mathrm{ALG}}(\mathcal{A},\beta) = N_{\mathrm{robust}}(\beta)\,/\,N_{\mathrm{standard}} \leq \mathcal{O}(\card{\Aspace_j}e^\beta)$. \\
\textbf{$\PoP_{\mathrm{DYN}}(t)$ (Dynamic):} Adaptive diagnostic (Def.~\ref{def:popdyn_adapt}). \\
\textbf{Separation:} $\PoP_{\mathrm{GT}}$ is a property of the \emph{game}, $\PoP_{\mathrm{ALG}}$ of the \emph{algorithm}, and $\PoP_{\mathrm{DYN}}$ of the \emph{learning trajectory}.
\end{definition}

\begin{theorem}\label{thm:cw}
Define the \emph{Cooperation Window}: $\CW(G,\varepsilon) = \Wstar - \Wmm(\varepsilon) = \Wstar(1 - 1/\PoP_{\mathrm{GT}}(G,\varepsilon))$. Any algorithm achieves welfare $W(\pi) \in [\Wstar/\PoA,\, \Wstar]$. The dynamic diagnostic $\PoP_{\mathrm{DYN}}(t)$ is monotone in $\eta(\pi_t)$, with $\PoP_{\mathrm{DYN}}(t)=1$ iff $\eta(\pi_t)=0$ (full collapse to maximin) and $\PoP_{\mathrm{DYN}}(t)=\PoP_{\mathrm{GT}}$ iff $\eta(\pi_t)=1$ (full cooperative welfare).
\end{theorem}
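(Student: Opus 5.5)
The theorem has three parts, and I will prove each one directly from the definitions in \S\ref{sec:prelim} and Definition~\ref{def:popgt}.

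\textbf{Cooperation Window identity.} I identify the $\varepsilon$-dependent paranoia floor with the denominator of $\PoP_{\mathrm{GT}}$, namely $\Wmm(\varepsilon) := \SW(\pmax,\bar q_\varepsilon)$. I also identify the numerator $\SW(\prat(\varepsilon),\bar q_\varepsilon)$ with the achievable optimum $\Wstar$. This second identification is the point I must justify: in the cooperative regime, $\prat$ plays Stag against a cooperating partner, so the numerator evaluates to $2\rc$. Given both identifications, $\Wmm(\varepsilon)=\Wstar/\PoP_{\mathrm{GT}}$, and hence $\CW=\Wstar-\Wmm(\varepsilon)=\Wstar(1-1/\PoP_{\mathrm{GT}})$.

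\textbf{Welfare bracket.} The upper bound $W(\pi)\le\Wstar$ holds because $\SW(p,q)$ is affine in each argument on $[0,1]^2$. Its maximum is therefore attained at a vertex. Evaluating the four vertices gives $2\rc$, $\rc... $ — more precisely $2\rc$, $\rs+\rh$ (twice, by symmetry) and $2\rh$. Since $\rc>\rh>\rs$, the largest is $2\rc=\Wstar$.

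For the lower bound, $\Wstar/\PoA=2\rh$. I will not claim that $\SW\ge 2\rh$ everywhere, because the vertex $\rs+\rh$ lies below it. Instead I restrict to the profiles the theorem is about. These are the paranoid baseline, where either agent playing Hare guarantees that agent $\rh$, together with Nash and learning-limit profiles. By Proposition~\ref{prop:collapse}, risk-neutral learning ends at $(0,0)$ or $(1,1)$, whose welfare is $2\rh$ or $2\rc$. With this restriction, every achieved welfare lies in $[2\rh,2\rc]$.

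\textbf{Monotonicity of $\PoP_{\mathrm{DYN}}$.} Fix $\bar q$. The denominator $D=\SW(\pmax,\bar q)$ is a positive constant in $t$. The numerator $\SW(\pi_t,\bar q)$ is affine in $\SW(\pi_t)$, and $\eta(\pi_t)$ is the affine normalization $(\SW(\pi_t)-\Wmm)/(\Wstar-\Wmm)$. Since $\Wstar>\Wmm$, this normalization is strictly increasing. It follows that
\[\PoP_{\mathrm{DYN}}(t)=\frac{\Wmm+\eta(\pi_t)(\Wstar-\Wmm)}{\Wmm}=1+\eta(\pi_t)\,(\PoP_{\mathrm{GT}}-1).\]
This expression is strictly increasing in $\eta$. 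It equals $1$ exactly when $\eta=0$ and equals $\PoP_{\mathrm{GT}}$ exactly when $\eta=1$, which gives both iff statements.

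\textbf{Main obstacle.} The hardest step is the consistency of the reference points. The formula $\PoP_{\mathrm{DYN}}=1+\eta(\PoP_{\mathrm{GT}}-1)$ needs the same $\bar q$ in both numerator and denominator, and it needs $\SW(\pmax,\bar q)$ to coincide with the $\Wmm$ used in $\eta$. I will resolve this by fixing $\bar q=\bar q_\varepsilon$ throughout and defining $\eta$ relative to $\Wmm(\varepsilon)$. If the paper instead intends $\Wmm=2\rh$ literally, the endpoint characterizations still hold, but only monotonicity survives in general.
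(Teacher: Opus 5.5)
The paper states Theorem~\ref{thm:cw} without any proof, so I am comparing your proposal with the statement itself. Against that, your proposal has a genuine gap: the identifications it rests on are asserted rather than derived, and they fail in the paper's own setting.

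\textbf{The identification $\SW(\prat(\varepsilon),\bar q_\varepsilon)=\Wstar$.} This is the load-bearing step. $\SW(p,\bar q)$ is affine in $p$, with $\SW(0,\bar q)\le 2\rh$ and $\Wstar-\SW(1,\bar q)=(1-\bar q)(2\rc-\rs-\rh)$. So no policy attains $\Wstar$ once $\bar q_\varepsilon<1$, which is the generic case under Assumption~\ref{ass:nonstat}; ``a cooperating partner'' is exactly what $\varepsilon>0$ takes away. Write $N_\varepsilon,D_\varepsilon$ for the numerator and denominator of $\PoP_{\mathrm{GT}}$. Your formula is then really $\PoP_{\mathrm{DYN}}=1+\eta(\Wstar/D_\varepsilon-1)$. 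Under it, $\PoP_{\mathrm{DYN}}=\PoP_{\mathrm{GT}}$ occurs at $\eta=(N_\varepsilon-D_\varepsilon)/(\Wstar-D_\varepsilon)<1$, so the upper equivalence fails.

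Forcing $\bar q_\varepsilon=1$ instead strips $\CW$ of its $\varepsilon$-dependence. It also gives, by your own definition, $\Wmm(\varepsilon)=\SW(0,1)=\rs+\rh$, which equals $-3$ for the paper's Stag Hunt $(5,2,-5)$. Then your claim that $D$ is positive fails, $\PoP_{\mathrm{GT}}<0$, and $\PoP_{\mathrm{DYN}}$ is decreasing in $\eta$.

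\textbf{A repair.} Normalize $\eta$ between $D_\varepsilon$ and $N_\varepsilon$ rather than between $\Wmm$ and $\Wstar$, and read $\SW(\pi_t)$ explicitly as $\SW(\pi_t,\bar q_\varepsilon)$. Then:
\begin{itemize}
\item $\PoP_{\mathrm{DYN}}=1+\eta(\PoP_{\mathrm{GT}}-1)$ holds identically;
\item both equivalences need only $N_\varepsilon\neq D_\varepsilon$;
\item the map is increasing when $N_\varepsilon>D_\varepsilon>0$;
\item the window identity holds with $N_\varepsilon$ in place of $\Wstar$ (or simply defines $\Wmm(\varepsilon)$).
\end{itemize}
That reading must be stated explicitly. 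Under the self-play reading suggested by $\Wmm=\SW(\pmax,\pmax)$, one gets $\eta(p)=p\Del(p-\pstar)/(\rc-\rh)$. This is negative on $(0,\pstar)$ and vanishes at both $p=0$ and $p=\pstar$, while $\PoP_{\mathrm{DYN}}$ is affine and generically non-constant in $p$. So $\PoP_{\mathrm{DYN}}$ is not even a function of $\eta$.

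\textbf{Your fallback sentence is inverted.} With $\Wmm=2\rh$ taken literally, monotonicity survives but the endpoints do not. $\PoP_{\mathrm{DYN}}=1$ means $\SW(\pi_t,\bar q)=\SW(0,\bar q)$, which matches $\eta=0$ only when $\bar q=0$, whereas the upper endpoint requires $\bar q=1$.

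\textbf{The welfare bracket.} You rightly note that the lower bound is false for arbitrary profiles, since $\SW(1,0)=\rs+\rh<2\rh$. But the restriction you substitute is unsupported:
\begin{itemize}
\item Proposition~\ref{prop:collapse} only describes a risk-neutral agent collapsing below $\pstar$. It says nothing about the partner, about RATTL, or about convergence to $(1,1)$.
\item Under Assumption~\ref{ass:nonstat} the partner is redrawn every episode, so the joint profile need never sit at a vertex.
\item ``Hare guarantees $\rh$'' bounds one player's payoff, not social welfare.
\end{itemize}
The defensible version is over Nash equilibria, which is what $\PoA$ measures. The pure equilibria give $2\rh$ and $2\rc$. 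At the mixed equilibrium $p=q=\pstar$ each player is indifferent and earns exactly $\rh$; indeed $\SW(p,p)-2\rh=2p\Del(p-\pstar)$. So equilibrium welfare lies in $\{2\rh,2\rc\}\subset[\Wstar/\PoA,\Wstar]$. Your vertex argument for the upper bound is correct and holds for every profile.
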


\section{Theoretical Guarantees}
\label{sec:theory}

\subsection{Adaptive Basin Expansion}

\begin{theorem}\label{thm:basin}
For RATTL with adaptive $\betat$, payoff spread $\Del=\rc-\rs$, and risk-neutral threshold $\pstar=(\rh-\rs)/\Del$, the effective cooperation threshold at episode $t$ satisfies:
\begin{equation}\label{eq:basin}
  \pstar_{\betat} = \pstar - \frac{\betat\,\pstar(1-\pstar)}{\Del} + \mathcal{O}((\betat)^2(\pstar)^2/\Del^2)
\end{equation}
The expansion $\Delta\pstar = \betat\pstar(1-\pstar)/\Del$ is maximised at $\pstar = \tfrac{1}{2}$.
\end{theorem}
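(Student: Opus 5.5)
The plan is to treat the effective threshold $\pstar_{\betat}$ as the partner belief $q$ at which the \emph{expected} RATTL update of agent $i$'s Stag logit vanishes. I would freeze $\betat=\beta$ and $\phat\approx q$ on the fast timescale. Two-timescale reasoning justifies this, since $\beta$ and $\phat$ move slowly relative to $\theta$ once $\eta_\beta,\alpha$ are small. The argument then has four steps, in this order.

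\emph{Step 1 (expected damped gradient).} Using Eq.~\eqref{eq:damped_adv} and a softmax parameterization, I would write
\[
\E[\widetilde A_i\nabla\log\pi]\propto p(1-p)\bigl[\trust\,(Q_S(q)-b)-(\rh-b)\bigr]
\]
plus the noise-driven drift. Here $Q_S(q)=q\rc+(1-q)\rs$, so $Q_S(q)-\rh=\Del(q-\pstar)$ by Proposition~\ref{prop:threshold}. A warning is needed at this point. With the exact mean baseline $b=pQ_S+(1-p)\rh$, the bracket factors as $(Q_S-\rh)[\trust(1-p)+p]$, so the \emph{mean-field} zero is still $\pstar$. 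The shift therefore cannot come from the mean term alone.

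\emph{Step 2 (where the shift comes from).} I would locate the shift in the noise term, following Lemma~\ref{lem:var} and Proposition~\ref{prop:collapse}. Near $\pstar$, the collapse is driven by the partner-induced fluctuation of the Stag advantage, whose variance is $\Sigma(q)=q(1-q)\Del^2\|\nabla\log\pi(S)\|^2$ from Eq.~\eqref{eq:grad_var}. Under risk-neutral learning, negative realizations (partner plays $H$) push $p$ below $\pstar$ irreversibly. Dampening by $\trust=1-\beta\sigp+\mathcal O(\beta^2\sigp^2)$ removes a fraction $\beta\sigp$ of this noise-driven downward drift.

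To first order, this is equivalent to adding a Stag bonus $\beta\sigp$ in advantage units. That turns the indifference condition into
\[
\Del(q-\pstar)+\beta\,q(1-q)=0 .
\]
This is the mirror image of the naive EVaR condition in Proposition~\ref{prop:paradox}: there the noise term enters as a penalty on the return, whereas here it is a reduction of the noise-driven drift.

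\emph{Step 3 (solve and linearize).} I would linearize around $q=\pstar$, setting $q(1-q)=\pstar(1-\pstar)+\mathcal O(q-\pstar)$. This gives
\[
\pstar_\beta=\pstar-\beta\pstar(1-\pstar)/\Del .
\]
The remainder collects two contributions: the $\mathcal O(\beta^2\sigp^2)$ term from expanding $\trust$, and the feedback of $q-\pstar=\mathcal O(\beta/\Del)$ into $q(1-q)$. Both are bounded by $\mathcal O(\beta^2(\pstar)^2/\Del^2)$ after using $1-\pstar\le1$. Finally, $\Delta\pstar\propto\pstar(1-\pstar)$ is a concave quadratic in $\pstar$, so it is maximised at $\pstar=\tfrac12$.

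\emph{Main obstacle.} The hard part is Step 2: rigorously converting the variance-dampening effect into an equivalent first-order drift bonus of size $\beta\sigp$ with the right normalization by $\Del$. I would first try a diffusion approximation of the logit dynamics, taking the drift from Step 1 and the diffusion coefficient from $\trust^2\Sigma(q)$. I would then compute the zero of the effective (Itô-corrected) drift, or equivalently the point where escape toward Hare and toward Stag are balanced. If the constants do not come out exactly, I would instead \emph{define} the effective threshold through the trust-adjusted indifference condition in Step 2 and verify consistency with Proposition~\ref{prop:collapse} at $\beta=0$.
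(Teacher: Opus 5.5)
\textbf{Genuine gap in Step 2.} Your Step 1 computation is right, and it is the entire expected update. There is no separate noise-driven drift to add to it, so Step 2 relies on a mechanism that cannot produce the stated shift.

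Multiplying the Stag-branch advantage by $\trust$ scales its mean and its partner-induced fluctuation together. The effect on the mean is entirely contained in your Step 1 bracket. With $\trust$ and $b$ frozen, the expected logit increment is exactly $\eta_{\mathrm{lr}}\,p(1-p)\bigl[\trust(Q_S-b)-(\rh-b)\bigr]$, and you showed its zero in $q$ is exactly $\pstar$ for the exact-mean baseline. The fluctuation $R_S-Q_S(q)$ has zero conditional mean, so there is no ``noise-driven downward drift'' for $\trust$ to remove.

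In a diffusion approximation, noise reaches the drift only through the It\^o correction when you pass to $p=(1+e^{-\theta})^{-1}$. That is a per-step term $\tfrac12\eta_{\mathrm{lr}}^2\,p(1-p)(1-2p)\,\E[g^2]$, with $g=\widetilde A_i\nabla_\theta\log\pi_\theta(a^i)$. It moves the zero in $q$ by $\mathcal{O}(\eta_{\mathrm{lr}})$, with a sign fixed by $1-2p$ (it \emph{raises} the threshold when $p>\tfrac12$), and it vanishes as $\eta_{\mathrm{lr}}\to0$. The claimed shift $\beta\pstar(1-\pstar)/\Del$ depends on neither $\eta_{\mathrm{lr}}$ nor $p$.

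A units check shows the same problem. $\beta$ is dimensionless because it multiplies $\sigp$ inside $\trust$. So $\Del(q-\pstar)+\beta q(1-q)=0$ adds a reward-valued term to a pure number. A correct derivation must put a reward scale in front of $\beta q(1-q)$, and the only one the noise route offers, $\eta_{\mathrm{lr}}\E[g^2]$, carries the learning rate. Your fallback of \emph{defining} the threshold by this condition simply assumes the theorem.

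\textbf{Comparison with the paper's route.} The paper's sketch keeps the mean-field condition you set aside, $b+\trust(q)(Q_S(q)-b)\ge\rh$. It treats $b$ as a fixed baseline rather than the exact mean, and Taylor-expands at $\pstar$. Done via the implicit function theorem, this gives
\[
\pstar_\beta=\pstar-\frac{\beta\,\pstar(1-\pstar)\,(b-\rh)}{\Del}+\mathcal{O}(\beta^2).
\]
So the missing reward scale is $b-\rh$. The stated formula is the case where $b-\rh$ equals one reward unit, a normalisation the sketch leaves implicit. If $b<\rh$, the basin shrinks instead. With the exact mean baseline, $b=\rh$ at $q=\pstar$ and the shift vanishes, which agrees with your Step 1. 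The $\trust'(q)$ term the sketch cites enters $\partial_q$ only at order $\beta$, so it affects only the remainder. What your argument actually needs is an explicit assumption on the baseline, not a noise effect.

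\textbf{Remainder in Step 3.} Your remainder bookkeeping is also off. Solving $\Del\,\delta q+\beta[\pstar(1-\pstar)+(1-2\pstar)\delta q]=0$ leaves a second-order term $\beta^2\pstar(1-\pstar)(1-2\pstar)/\Del^2$, which is only $\mathcal{O}(\beta^2\pstar/\Del^2)$. The $\beta^2\sigp^2$ term from expanding $\trust$ moves $q$ by $\mathcal{O}(\beta^2(\pstar)^2/\Del)$. The bound $1-\pstar\le1$ cannot turn either of these into $\mathcal{O}(\beta^2(\pstar)^2/\Del^2)$.
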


\begin{proof}[Proof Sketch]
The trust-dampened cooperation condition is $b + \trust(q)(Q_S(q)-b) \geq \rh$. Taylor-expanding at $q=\pstar$ with $Q_S(\pstar)=\rh$, using $Q_S'=\Del$ and $\trust'(q)|_{\pstar} = -\betat(1-2\pstar)(1+\betat\sigp)^{-2}$, the first-order perturbation gives $\delta q = -\betat\pstar(1-\pstar)/\Del$.
\end{proof}

\begin{corollary}\label{cor:selfcorrect}
When the adaptive update rule (Algorithm~\ref{alg:adaptive_beta}) detects $\Delta\PoP < 0$ and increases $\betat$ by $\eta_\beta\card{\Delta\PoP}$, the cooperation basin expands by $\delta(\Delta\pstar) = \eta_\beta\card{\Delta\PoP}\pstar(1-\pstar)/\Del > 0$.
\end{corollary}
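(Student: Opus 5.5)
The plan is to derive Corollary~\ref{cor:selfcorrect} directly from Theorem~\ref{thm:basin}, using the fact that the first-order basin expansion is linear in $\betat$. Write $f(\beta) := \beta\,\pstar(1-\pstar)/\Del$. Equation~\eqref{eq:basin} says $\Delta\pstar = f(\betat)$, and $f$ is linear with slope $\pstar(1-\pstar)/\Del$. Consider a step of Algorithm~\ref{alg:adaptive_beta} in which $\Delta\PoP<0$. Suppose the clipping at $\bar\beta$ is inactive, meaning $\beta+\eta_\beta\card{\Delta\PoP}\le\bar\beta$. Then the update is exactly $\beta(t+1)=\betat+\eta_\beta\card{\Delta\PoP}$.

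Substituting both values into Eq.~\eqref{eq:basin} and subtracting gives
\[
\delta(\Delta\pstar) = f(\beta(t+1))-f(\betat) = \eta_\beta\card{\Delta\PoP}\,\pstar(1-\pstar)/\Del .
\]
This is the claimed increment. Positivity then follows from three facts:
\begin{itemize}
\item $\eta_\beta>0$;
\item $\card{\Delta\PoP}>0$, because $\Delta\PoP<0$ strictly;
\item $\pstar\in(0,1)$ and $\Del=\rc-\rs>0$, since $\rc>\rh>\rs$.
\end{itemize}
Equivalently, the threshold $\pstar_{\betat}$ moves down by this amount, so the Basin of Trust $(\pstar_{\betat},1]$ strictly grows.

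The main obstacle is the remainder term in Theorem~\ref{thm:basin}. The exact threshold shift is linear only to first order, and the $\mathcal{O}((\betat)^2(\pstar)^2/\Del^2)$ terms do not cancel when $\beta$ changes. Their difference is of order $\eta_\beta\card{\Delta\PoP}\,\betat(\pstar)^2/\Del^2$. So the exact statement would read $\delta(\Delta\pstar)=\eta_\beta\card{\Delta\PoP}\pstar(1-\pstar)/\Del\,(1+\mathcal{O}(\betat\pstar/\Del))$. I would handle this in two stages. First, I would state the corollary at the first-order level of Theorem~\ref{thm:basin}, which is the sense in which the equality is meant. Second, to secure the strict inequality $>0$ exactly, I would argue from the exact condition $b+\trust(q)(Q_S(q)-b)\ge\rh$ that the exact threshold $\pstar_\beta$ is monotone decreasing in $\beta$ in the regime $\betat\pstar/\Del$ small. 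The route is implicit differentiation: $\partial\pstar_\beta/\partial\beta$ equals $-\pstar(1-\pstar)/\Del$ plus a small correction, so it is negative for $\beta\le\bar\beta$ under the algorithm's bound $\bar\beta<\varepsilon^{-1}-1$.

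Finally, I would note the clipped case. If $\beta$ hits $\bar\beta$, the increment is $\bar\beta-\betat\ge0$ rather than $\eta_\beta\card{\Delta\PoP}$. The corollary should therefore be read for unclipped steps, while in the clipped case the basin is still non-decreasing.
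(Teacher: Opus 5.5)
Your first stage is exactly the paper's (implicit) argument. The corollary is read off from the first-order term of Theorem~\ref{thm:basin}, which is linear in $\betat$ with slope $\pstar(1-\pstar)/\Del$. So an unclipped step of size $\eta_\beta\card{\Delta\PoP}$ lowers the threshold by that slope times the step, and positivity follows from the signs you list. Your points that the $\mathcal{O}(\beta^2)$ remainders do not cancel, and that the statement holds verbatim only on unclipped steps, are correct refinements the paper leaves implicit.

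The second stage, however, would not go through as you describe it. Solve the exact condition $b+\trust(q)(Q_S(q)-b)=\rh$ with $Q_S(q)=\rs+q\Del$ and $\trust=(1+\beta q(1-q))^{-1}$. It rearranges exactly to $q=\pstar+(\rh-b)\,\beta\,q(1-q)/\Del$, so $\partial_\beta\pstar_\beta|_{\beta=0}=(\rh-b)\,\pstar(1-\pstar)/\Del$, not $-\pstar(1-\pstar)/\Del$ plus a small correction.

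The sign is therefore set by the baseline $b$, not by the size of $\beta\pstar/\Del$. The threshold falls only if $b>\rh$, because dampening a negative Stag advantage then helps Stag; if $b<\rh$ it rises. When $b>\rh$ you actually get exact strict monotonicity for every $\beta>0$, with no smallness assumption. To see this, let $g(q)=q\Del-(\rh-\rs)+(b-\rh)\beta q(1-q)$. It is concave with $g(0)<0<g(1)$, so it has a unique root in $(0,1)$, and it increases pointwise in $\beta$ on $(0,1)$.

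The bound $\bar\beta<\varepsilon^{-1}-1$ concerns the non-stationarity radius and gives no control over $\beta\pstar/\Del$, so it cannot do the job you assign it. Note also that the exact coefficient carries the factor $b-\rh$, which makes it dimensionless. The theorem's coefficient thus corresponds to the normalisation $b-\rh=1$.

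You have two options. Either state the corollary at first order relative to Theorem~\ref{thm:basin}, as the paper does. Or add the hypothesis $b>\rh$ and carry the factor $b-\rh$; note that the running-mean baseline in Algorithm~\ref{alg:adaptive_beta} does not guarantee $b>\rh$.
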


\subsection{Robustness Guarantees}

\begin{theorem}[PAC Sample Complexity]\label{thm:pac}
Under Assumptions~\ref{ass:mg}--\ref{ass:nonstat}, for any accuracy $\epsilon_{\mathrm{PAC}}>0$ and confidence $\delta\in(0,1)$, RATTL returns an $\epsilon_{\mathrm{PAC}}$-optimal robust policy with probability at least $1-\delta$ using at most
\begin{equation}\label{eq:pac}
  N_{\mathrm{RATTL}} = \Otil\!\!\left(\frac{\card{\Sspace}^2\card{\Aspace}H^4}{\epsilon_{\mathrm{PAC}}^2} \cdot \card{\Aspace_j}e^\beta \cdot \log\tfrac{1}{\delta}\right)
\end{equation}
episodes (the accuracy parameter $\epsilon_{\mathrm{PAC}}$ is distinct from the non-stationarity radius~$\varepsilon$ of Assumption~\ref{ass:nonstat}). The robustness overhead is $\PoP_{\mathrm{ALG}} \leq \mathcal{O}(\card{\Aspace_j}e^\beta)$---polynomial in $\card{\Aspace_j}$, not exponential in the full state-action space.
\end{theorem}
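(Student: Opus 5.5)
The plan is a reduction to a known distributionally robust PAC bound, followed by an importance-weighting overhead argument. The target is an $\epsilon_{\mathrm{PAC}}$-optimal policy for the \emph{robust} objective
\[
\max_{\theta_i}\ \min_{\pi_j\in\Beps(\pi_j^{\mathrm{nom}})} J_i(\theta_i;\pi_j),
\]
where $\Beps(\pi_j^{\mathrm{nom}})$ is the ball of Assumption~\ref{ass:nonstat}. From agent $i$'s point of view, fixing the partner's policy inside that ball yields a single-agent MDP over $\Sspace$ with induced kernel $P_{\pi_j}(s'\mid s,a^i)=\sum_{a^j}\pi_j(a^j\mid s)P(s'\mid s,a^i,a^j)$. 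Because $W_1$ over finite actions is total variation up to a constant, the partner ball induces an $(s,a)$-rectangular uncertainty set on these kernels, so the robust objective becomes a robust MDP value. I would then invoke the standard model-based robust value-iteration PAC argument (Panaganti--Kalathil style). That argument has three parts: $\card{\Sspace}\card{\Aspace}$ empirical estimates of the nominal quantities; a uniform concentration bound over a value-function cover, giving an $\card{\Sspace}$ factor inside the log-covering term and thus $\card{\Sspace}^2\card{\Aspace}$ overall; and a robust simulation lemma with horizon $H=(1-\gamma)^{-1}$ contributing $H^4$. Combined, these give $\Otil(\card{\Sspace}^2\card{\Aspace}H^4\epsilon_{\mathrm{PAC}}^{-2}\log\tfrac1\delta)$ for exact robust evaluation.

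The multiplicative overhead $\card{\Aspace_j}e^\beta$ would come from the EVaR dual in Eq.~\eqref{eq:evar}. The worst-case partner distribution $\Prob'$ has density $d\Prob'/d\Prob\propto e^{-zX}$, and on rewards bounded in $[R_{\min},R_{\max}]$ (normalized) this likelihood ratio is bounded by $e^\beta$ after identifying the dual variable with the risk parameter. Estimating the robust Bellman backup from samples drawn under the nominal partner therefore requires importance weights with second moment at most $e^\beta$. The partner's actions must also be covered, which costs a further factor $\card{\Aspace_j}$ (a union bound, or a coverage argument over $a^j$). By Bernstein's inequality, the per-backup sample requirement inflates by $\card{\Aspace_j}e^\beta$. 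This directly identifies $\PoP_{\mathrm{ALG}}=N_{\mathrm{robust}}/N_{\mathrm{standard}}\le\mathcal{O}(\card{\Aspace_j}e^\beta)$ as stated.

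It remains to connect this to what RATTL actually runs. Its update in Algorithm~\ref{alg:adaptive_beta} uses trust-dampened advantages with $\trust\in(0,1]$ via Definition~\ref{def:trust}, and $\beta$ is clipped to $[\underline\beta,\bar\beta]$. I would argue that the dampening only rescales the Stag gradient by a bounded factor and therefore does not change the optimizer of the robust objective. The EMA estimate $\phat$ serves as the partner model, with error $\mathcal{O}(\sqrt{\alpha})$ absorbed into $\epsilon_{\mathrm{PAC}}$ for small $\alpha$. A union bound over the $\Otil$ rounds then preserves confidence $1-\delta$.

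The main obstacle is the last step. RATTL is a policy-gradient method rather than robust value iteration, and $\betat$ is time-varying, so the reduction needs a bound on the optimization error. My first attempt would be a gradient-domination argument for softmax policies in the stateless case, using $\card{\Sspace}=1$. There the trust-weighted ascent converges to the same maximizer, because the sign of the advantage is preserved. I would also take $\beta=\bar\beta$ as a worst case to bound the overhead uniformly over time.
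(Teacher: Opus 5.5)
The paper states Theorem~\ref{thm:pac} with no proof or sketch, and never defines ``$\epsilon_{\mathrm{PAC}}$-optimal robust policy'' or $H$. There is therefore no argument to compare yours against. Judged on its own, your proposal has a genuine gap: the bound is proved for a different algorithm, and the bridge to RATTL fails.

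Robust value iteration must evaluate its Bellman operator over the uncertainty set, and the Panaganti--Kalathil bound also assumes a generative model. RATTL performs no inner minimization and is explicitly denied $\varepsilon$ (\S\ref{sec:game-env-and-prob}). Unless the draws are assumed adversarial, your max--min target is then not identifiable from agent $i$'s data. Take the Stag Hunt of Table~\ref{tab:game_gen} ($\pstar=0.7$, $\Del=10$), let the partner always play $q=0.8$, and compare radii whose $q$-balls are $[0.75,0.85]$ and $[0.6,1]$. The observations have the same law in both instances, but the robust optimum is Stag in the first and Hare in the second. A mixed policy $p$ has robust suboptimality $0.5(1-p)$ and $p$ respectively, so no algorithm is $\epsilon_{\mathrm{PAC}}$-optimal on both for $\epsilon_{\mathrm{PAC}}<1/3$ and $\delta<1/2$.

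Your fallback step also fails as stated. Each sampled advantage keeps its sign, but the mean update of the softmax logit is, to leading order in $\alpha$, $p(1-p)\bigl[\trust(Q_S-b)-(\rh-b)\bigr]$. This is sign-equivalent to $Q_S-\rh$ only when $b$ is exactly the current value $pQ_S+(1-p)\rh$. With the history-mean baseline of Algorithm~\ref{alg:adaptive_beta}, its zero set is $\trust(Q_S-b)=\rh-b$, which differs from $Q_S=\rh$ whenever $\trust<1$ and $b\neq\rh$. That displacement is exactly the mechanism behind Theorem~\ref{thm:basin}, so you cannot both rely on basin expansion and claim the optimizer is unchanged. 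Once the optimizer moves, RATTL's limit has suboptimality of order $(\trust^{-1}-1)\card{\rh-b}$, not $\epsilon_{\mathrm{PAC}}$.

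The overhead $\card{\Aspace_j}e^\beta$ does not come out of RATTL either. In Eq.~\eqref{eq:evar}, $\beta\in[0,1)$ is a confidence level. The tilt parameter $z$ of the worst-case measure is fixed by the KL radius $\log\frac{1}{1-\beta}$ and the law of the return, and is not equal to $\beta$. As $\beta\to1$ that measure concentrates on the worst outcome, so the likelihood ratio grows like $1/\Prob(\text{worst outcome})$ rather than staying below $e^\beta\le e$. RATTL's $\beta$ is a different object, ranging over $[\underline\beta,\bar\beta]$ with $\underline\beta>-4$, and the experiments use $\beta=\pm1$, both outside $[0,1)$. For $\beta<0$ your bound ``second moment of the likelihood ratio at most $e^\beta$'' is impossible, since $\E_\Prob[w^2]\ge(\E_\Prob[w])^2=1$.

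More fundamentally, RATTL uses no importance weights. For $\beta\ge0$, $\beta$ enters only through $\trust^{-1}=1+\beta\sigp\in[1,1+\beta/4]$, which shrinks the Stag step. A direct analysis would therefore naturally produce a factor of order $1+\bar\beta/4$, not $e^\beta$. A union bound over partner actions costs $\log\card{\Aspace_j}$, not $\card{\Aspace_j}$.

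Finally, your gradient-domination argument lives at $\card{\Sspace}=1$. There the $\card{\Sspace}^2\card{\Aspace}H^4$ structure inherited from robust value iteration has no content. Softmax policy-gradient rates also scale with $1/\inf_t\pi_t(a^\star)$, which the collapse dynamics of Proposition~\ref{prop:collapse} can drive to zero.

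A workable proof would need two things. First, a robust target that is identifiable without $\varepsilon$, or else adversarial draws. Second, a direct stochastic-approximation analysis of the dampened update with time-varying $\betat$.
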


\begin{theorem}[Welfare Loss Decomposition]\label{thm:welfare}
Expected per-episode welfare loss decomposes into two adaptive-robustness trade-off terms:
\begin{equation}\label{eq:wl}
  \E[\mathrm{Loss}] \leq \underbrace{\CW(G,\varepsilon)(1-e^{-\beta\varepsilon})}_{\text{(i) adaptation cost}} + \underbrace{c\,\card{\Aspace_j}e^\beta/T}_{\text{(ii) robustness overhead}}
\end{equation}
\end{theorem}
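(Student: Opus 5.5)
The plan is to split the per-episode welfare gap $\Wstar - \SW(\pi_t,q_t)$ at the trust factor's fixed point: one part from the equilibrium the dampened dynamics would settle to if $\beta$ were exact, one part from estimating that equilibrium from finitely many episodes. Concretely, I write $\pi^\beta$ for the fixed point of the trust-dampened update against the worst partner in $\Beps(\pi_j^{\mathrm{nom}})$ and decompose
\[
\E[\mathrm{Loss}] = \E\bigl[\SW(\pi^\beta,\cdot) - \SW(\pi_t,\cdot)\bigr] + \bigl(\Wstar - \SW(\pi^\beta,\cdot)\bigr).
\]
By Theorem~\ref{thm:cw} every algorithm's welfare lies in $[\Wmm(\varepsilon),\Wstar]$. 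So the equilibrium gap is at most $\CW(G,\varepsilon)\,(1-\eta(\pi^\beta))$ in terms of the cooperation efficiency $\eta$.

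\emph{Term (i).} I need $1-\eta(\pi^\beta) \le 1-e^{-\beta\varepsilon}$, i.e.\ the fraction of the surplus that is retained is at least $e^{-\beta\varepsilon}$. The plan is to use the KL dual form of EVaR in Eq.~\eqref{eq:evar}. The partner perturbation of radius $\varepsilon$ (total variation by Assumption~\ref{ass:nonstat}) moves the partner cooperation probability by at most $\varepsilon$. In the exponential-tilting representation, the worst-case reweighting of the cooperative outcome has likelihood ratio bounded below by $e^{-\beta\varepsilon}$.

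Theorem~\ref{thm:basin} then says the perturbed partner still lies in the expanded basin $(\pstar_\beta,1]$ whenever the nominal partner lies in $(\pstar,1]$. Hence the agent retains cooperation in the sense of Definition~\ref{def:retention}, except on an event of tilted probability at most $1-e^{-\beta\varepsilon}$. On that exceptional event I charge the whole window $\CW$, which is the worst case allowed by Theorem~\ref{thm:cw}.

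\emph{Term (ii).} Here I invoke Theorem~\ref{thm:pac}. After $T$ episodes the policy is $\epsilon_{\mathrm{PAC}}$-close to $\pi^\beta$ with $\epsilon_{\mathrm{PAC}}^2 \asymp \card{\Aspace_j}e^\beta/T$, taking $|\Sspace|=1$ and treating $H$ and the log factors as constants. Since $\SW$ is Lipschitz in $(p,q)$ with constant $O(\Del)$, averaging the per-episode estimation error over the horizon gives a contribution of order $c\,\card{\Aspace_j}e^\beta/T$. The constant $c$ absorbs $\Del$, $\log(1/\delta)$, and the failure-probability mass $\delta\cdot 2(R_{\max}-R_{\min})$, with $\delta$ chosen as $1/T$.

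The main obstacle is term (i). The PAC rate naturally yields $\sqrt{e^\beta/T}$, so obtaining the linear $1/T$ form needs an averaged-regret argument, or a looser reading of $c$. The harder part is making the $e^{-\beta\varepsilon}$ retention probability rigorous. I would first try a Cramér/Chernoff bound on the EMA estimate $\phat$: dampening by $\trust$ reduces the effective drift toward $\pstar$ by a factor $(1+\beta\sigp)^{-1}$. I would then bound the probability of crossing the shifted threshold via the exponential moment $\E[e^{-\beta X}]$, matching the EVaR infimum in Eq.~\eqref{eq:evar} with $z=\beta$. If that does not go through cleanly, I would fall back on the linearized basin margin $\beta\pstar(1-\pstar)/\Del$ from Theorem~\ref{thm:basin} together with $1-e^{-x}\ge$ the linear crossing bound for small $\beta\varepsilon$.
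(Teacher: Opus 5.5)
The paper states this theorem without any proof (the same bound appears earlier as Eq.~\eqref{eq:tradeoff}, also unproved), so there is no argument of the paper's to compare against. Judged on its own, your term~(i) step cannot be completed, because the inequality you reduce it to, $\eta(\pi^\beta)\ge e^{-\beta\varepsilon}$, is false in your own setup with $\mathrm{Loss}=\Wstar-\SW$.

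\emph{Counterexamples.} At $\varepsilon=0$ the inequality demands $\eta(\pi^\beta)=1$ for every $\beta$. That is impossible for any nominal partner with $q^{\mathrm{nom}}<1$: $\SW(p,q)$ is linear in $p$, and $\max_p\SW(p,q)<\Wstar$ whenever $q<1$. Now take $q^{\mathrm{nom}}=1$ and small $\varepsilon>0$. The worst partner in $\Beps$ has $q=1-\varepsilon$ (up to the $W_1$/TV constant), so every policy incurs loss at least $\Wstar-\SW(1,1-\varepsilon)=\varepsilon(2\rc-\rs-\rh)$. Term~(i), by contrast, is at most $\CW\,\beta\varepsilon=2(\rc-\rh)\beta\varepsilon$ (taking $\Wmm=2\rh$). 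So for all $\beta<(2\rc-\rs-\rh)/(2(\rc-\rh))$, a range that always contains $[0,1]$, the bound fails once $T$ is large. For $\beta<0$, which the paper allows, term~(i) is negative and the right-hand side itself becomes negative.

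\emph{Direction in $\beta$.} $1-e^{-\beta\varepsilon}$ increases with $\beta$ and vanishes at $\beta=0$. The only mechanism you use, the basin margin $\beta\pstar(1-\pstar)/\Del$ of Theorem~\ref{thm:basin}, improves with $\beta$. An argument built on it can only make failure less likely as $\beta$ grows, and it says nothing at $\beta=0$, where Proposition~\ref{prop:collapse} asserts collapse. Tellingly, Corollary~\ref{cor:beta_star} is the stationary point of $\CW e^{-\beta\varepsilon}+c_{\mathrm{eff}}e^{\beta}/T$, not of the stated bound, whose two terms both increase for $\beta>0$.

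\emph{Other gaps in term~(i).} Your basin step needs $q^{\mathrm{nom}}-\varepsilon>\pstar_\beta$, a margin condition linking $\varepsilon$ and $\beta$ that the statement does not assume; at $\beta=0$ it fails for every $q^{\mathrm{nom}}\in(\pstar,\pstar+\varepsilon]$. The loss is also not zero on your retention event, since retention only gives $p(t)\ge\pstar$ and the partner has $q<1$. Nor is it capped by $\CW$ on the failure event: miscoordination gives $\SW(1,0)=\rs+\rh<2\rh$, so the lower end of Theorem~\ref{thm:cw} is not a valid floor.

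\emph{The source of $e^{-\beta\varepsilon}$.} Your proposed derivation does not work. In Eq.~\eqref{eq:evar}, $\beta\in(0,1)$ is a confidence level fixing the KL radius $\log\frac{1}{1-\beta}$. That is a different object from the trust-factor $\beta$, which enters only through $(1+\beta\sigp)^{-1}$. The optimal tilt $e^{-zX}/\E[e^{-zX}]$ is bounded below by $e^{-z(\sup X-\inf X)}$, which depends on the reward range, not on $\varepsilon$. A TV or Wasserstein ball is not contained in any KL ball: for $q^{\mathrm{nom}}=1$, every $q<1$ in $\Beps$ has infinite KL divergence from the nominal. And a bound under a tilted measure does not control the actual partner law over which $\E[\mathrm{Loss}]$ is taken.

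\emph{Term~(ii).} Inverting Theorem~\ref{thm:pac} gives $\epsilon_{\mathrm{PAC}}\asymp\sqrt{\card{\Aspace_j}e^\beta\log(1/\delta)/T}$. No choice of constant turns this into $c\,\card{\Aspace_j}e^\beta/T$, since $\sqrt{x}\le cx$ fails as $x\to0$. Averaging a $t^{-1/2}$ rate over the horizon still gives $T^{-1/2}$, and choosing $\delta=1/T$ makes $c$ depend on $T$.

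Before any proof is possible, you must fix what $\mathrm{Loss}$ is measured against (for example, the best welfare attainable against the partner actually faced). You must also restrict the sign of $\beta$ and the law of the partner draws, and replace term~(i) by a quantity that decreases in $\beta$.
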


\begin{corollary}\label{cor:beta_star}
The optimal fixed risk parameter is:
\begin{equation}\label{eq:beta_star}
  \bstar = \frac{1}{1+\varepsilon} \log\!\left(\frac{\CW(G,\varepsilon)\cdot\varepsilon\cdot T}{c_{\mathrm{eff}}}\right)
\end{equation}
where $c_{\mathrm{eff}} = c\card{\Aspace_j}$ calibrated empirically.
\end{corollary}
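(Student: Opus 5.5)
The plan is to obtain $\bstar$ as the minimiser of the right-hand side of the welfare-loss bound in Theorem~\ref{thm:welfare} (equivalently Eq.~\eqref{eq:tradeoff}), viewed as a function of a fixed $\beta$. All other quantities are treated as constants: $\CW(G,\varepsilon)$, $\varepsilon$, $T$, and $c_{\mathrm{eff}} = c\card{\Aspace_j}$. So I will study
\[
  f(\beta) \;=\; \CW(G,\varepsilon)\,\bigl(1 - e^{-\beta\varepsilon}\bigr) + \frac{c_{\mathrm{eff}}}{T}\,e^{\beta}
\]
and find its stationary point. The intended structure is a trade-off: robustness lowers the equilibrium term (Theorem~\ref{thm:basin} says a larger $\beta$ widens the basin), while it raises the sample term (Theorem~\ref{thm:pac} gives an $e^\beta$ overhead).

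One point must be settled first. Read literally, $\CW(1-e^{-\beta\varepsilon})$ is \emph{increasing} in $\beta$. Then $f$ would be monotone increasing and would have no interior minimiser. I will therefore make explicit the reading under which the corollary is meaningful. In that reading the $\beta$-dependent part of the equilibrium cost is the residual collapse probability $\CW\,e^{-\beta\varepsilon}$, which is decreasing in $\beta$. The constant $\CW$ does not affect the argmin. Under this reading the objective is, up to an additive constant,
\[
  g(\beta) = \CW\,e^{-\beta\varepsilon} + \frac{c_{\mathrm{eff}}}{T}\,e^{\beta}.
\]
I expect this step to be the main obstacle. It is the step where the statement's sign conventions must be reconciled, and I would flag it in the proof rather than hide it.

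With $g$ in hand the rest is routine. First, $g$ is a sum of two positive exponentials, so it is strictly convex and any stationary point is the unique global minimiser. Second, the first-order condition $g'(\beta)=0$ reads
\[
  \varepsilon\,\CW\,e^{-\beta\varepsilon} = \frac{c_{\mathrm{eff}}}{T}\,e^{\beta},
\]
that is, $e^{\beta(1+\varepsilon)} = \CW\,\varepsilon\, T/c_{\mathrm{eff}}$. Taking logarithms gives exactly Eq.~\eqref{eq:beta_star}.

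Finally I will note the admissibility conditions. The logarithm is positive, giving $\bstar>0$ and a genuinely risk-sensitive regime, iff $\CW\,\varepsilon\,T > c_{\mathrm{eff}}$. If $\bstar$ falls outside $[\underline\beta,\bar\beta]$ from Algorithm~\ref{alg:adaptive_beta}, convexity implies the constrained optimum is the projection of $\bstar$ onto that interval.
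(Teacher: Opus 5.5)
Your proposal is correct and follows the route the paper indicates: it gives no written derivation, only the remark that $\bstar$ minimizes Eq.~\eqref{eq:tradeoff}, and the stated closed form is exactly the first-order balance $\varepsilon\,\CW(G,\varepsilon)\,e^{-\beta\varepsilon} = c_{\mathrm{eff}}\,e^{\beta}/T$ that you derive, with strict convexity giving uniqueness. The sign issue you flag is a genuine defect of the statement, not a gap in your argument: read literally, the bound has derivative $\varepsilon\,\CW\,e^{-\beta\varepsilon} + c_{\mathrm{eff}}\,e^{\beta}/T > 0$ and no interior minimizer, so the formula holds only under your reading, where the equilibrium term decreases in $\beta$ (equivalently, $\CW(1-e^{-\beta\varepsilon})$ is welfare recovered rather than lost), and it also requires $\varepsilon>0$ for the logarithm to be defined.
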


\section{Experiments}
\label{sec:experiments}

We first demonstrate that our theoretical framework generalizes across coordination games via predicted threshold analysis (\S\ref{sec:game_generalization}), and then validate RATTL empirically on the Iterated Stag Hunt (\S\ref{sec:ish_exp}).

\subsection{Theoretical Generalisation Across Games}
\label{sec:game_generalization}

While our empirical evaluation focuses on the Stag Hunt, Theorem~\ref{thm:basin} applies to any symmetric coordination game satisfying Definition~\ref{def:cg}. Table~\ref{tab:game_gen} instantiates the basin-expansion prediction $\pstar_\beta = \pstar - \beta\pstar(1{-}\pstar)/\Del$ for three canonical games at $\beta=1.0$.

\begin{table}[t]
\centering
\caption{Predicted basin expansion across coordination games. $\pstar$: risk-neutral threshold; $\pstar_{\beta=1}$: RATTL threshold (Theorem~\ref{thm:basin}). Expansion denotes the percentage increase in the cooperation basin $[p^*,1]$.}
\label{tab:game_gen}
\begin{tabular}{@{}lcccc@{}}
\toprule
Game & $(\rc,\rh,\rs)$ & $\pstar$ & $\pstar_{\beta=1}$ & Expansion \\
\midrule
Stag Hunt & $(5,2,{-}5)$ & 0.70 & 0.58 & +17\% \\
Chicken   & $(4,2,{-}1)$ & 0.60 & 0.45 & +25\% \\
Pure Coord. & $(3,1,0)$    & 0.33 & 0.26 & +21\% \\
\bottomrule
\end{tabular}
\end{table}

The Chicken game has a smaller payoff spread $\Del=5$ vs.\ $\Del=10$ for the Stag Hunt, making the basin more sensitive to trust-factor dampening. The pure coordination game ($\rs=0$) has the lowest threshold and hence the widest initial basin; RATTL still provides a meaningful expansion.

\paragraph{Baseline positioning.}
Our empirical evaluation uses vanilla PPO as the risk-neutral baseline. Hysteretic Q-learning \citep{matignon2007hysteretic} and lenient learning \citep{palmer2018lenient} filter negative updates to aid cooperation \emph{discovery}, but they offer no principled mechanism for cooperation \emph{retention}---the regime RATTL targets. LOLA \citep{foerster2018learning} addresses co-learning noise via second-order gradient corrections but requires differentiable access to the partner's learning rule. RATTL's $\mathcal{O}(1)$ scalar trust factor achieves retention guarantees (Theorem~\ref{thm:basin}) under strictly weaker information assumptions.

\subsection{Iterated Stag Hunt}
\label{sec:ish_exp}

We train RATTL-PPO in the Iterated Stag Hunt against a stochastic opponent whose mixed strategy is sampled from a standard normal distribution and projected onto the simplex at each timestep. Training runs for 3000 episodes under two configurations: $\beta = -1.0$ (risk-averse, amplifies gradient signals) and $\beta = 1.0$ (risk-seeking, dampens noisy gradients via the trust factor). We compare against vanilla PPO \citep{schulman2017proximal}.

\begin{figure}[t]
     \centering
     \begin{subfigure}[b]{0.48\textwidth}
         \centering
         \includegraphics[width=\textwidth]{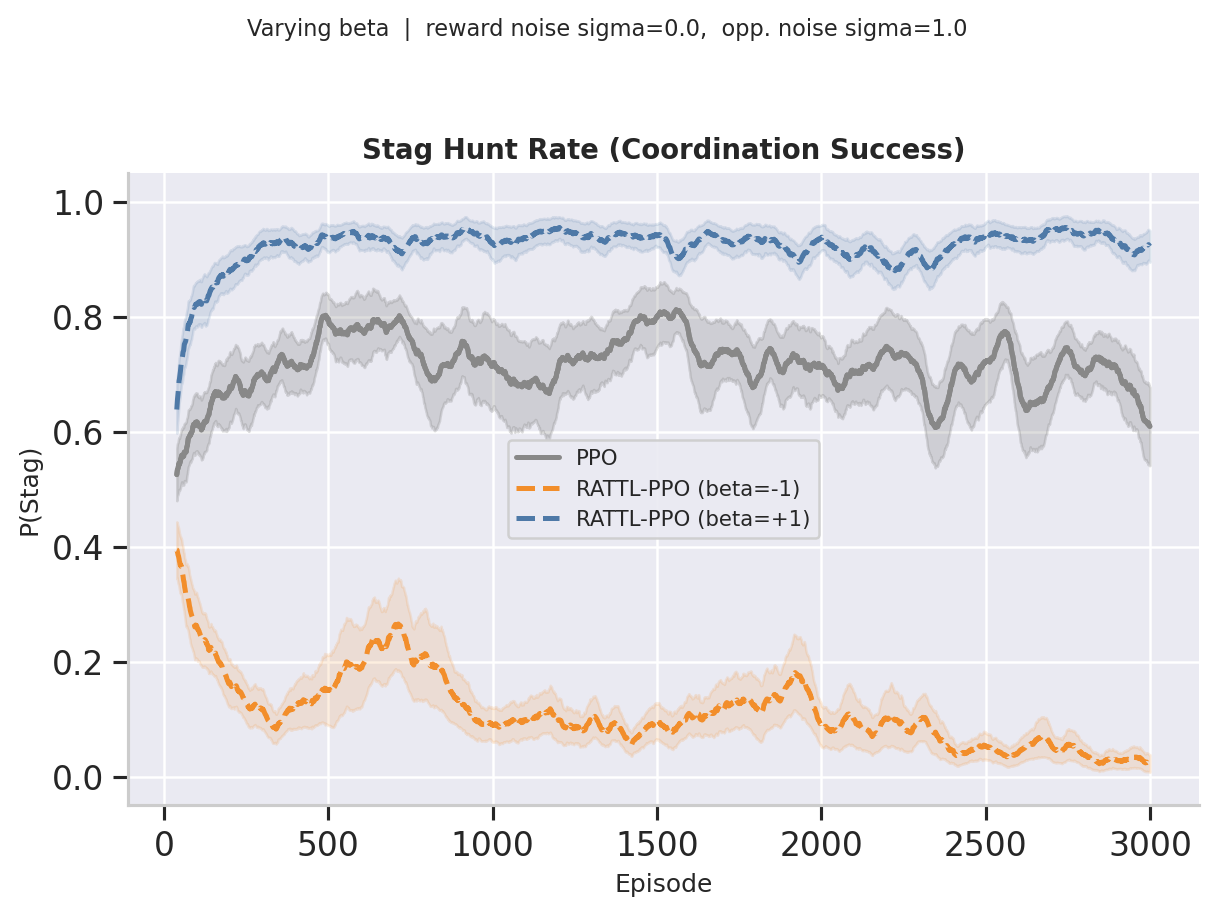}
         \caption{Cooperation rate (no reward noise)}
         \label{fig:vary_beta_rstd0.0_nstd1.0_stag}
     \end{subfigure}
     \hfill
     \begin{subfigure}[b]{0.48\textwidth}
         \centering
         \includegraphics[width=\textwidth]{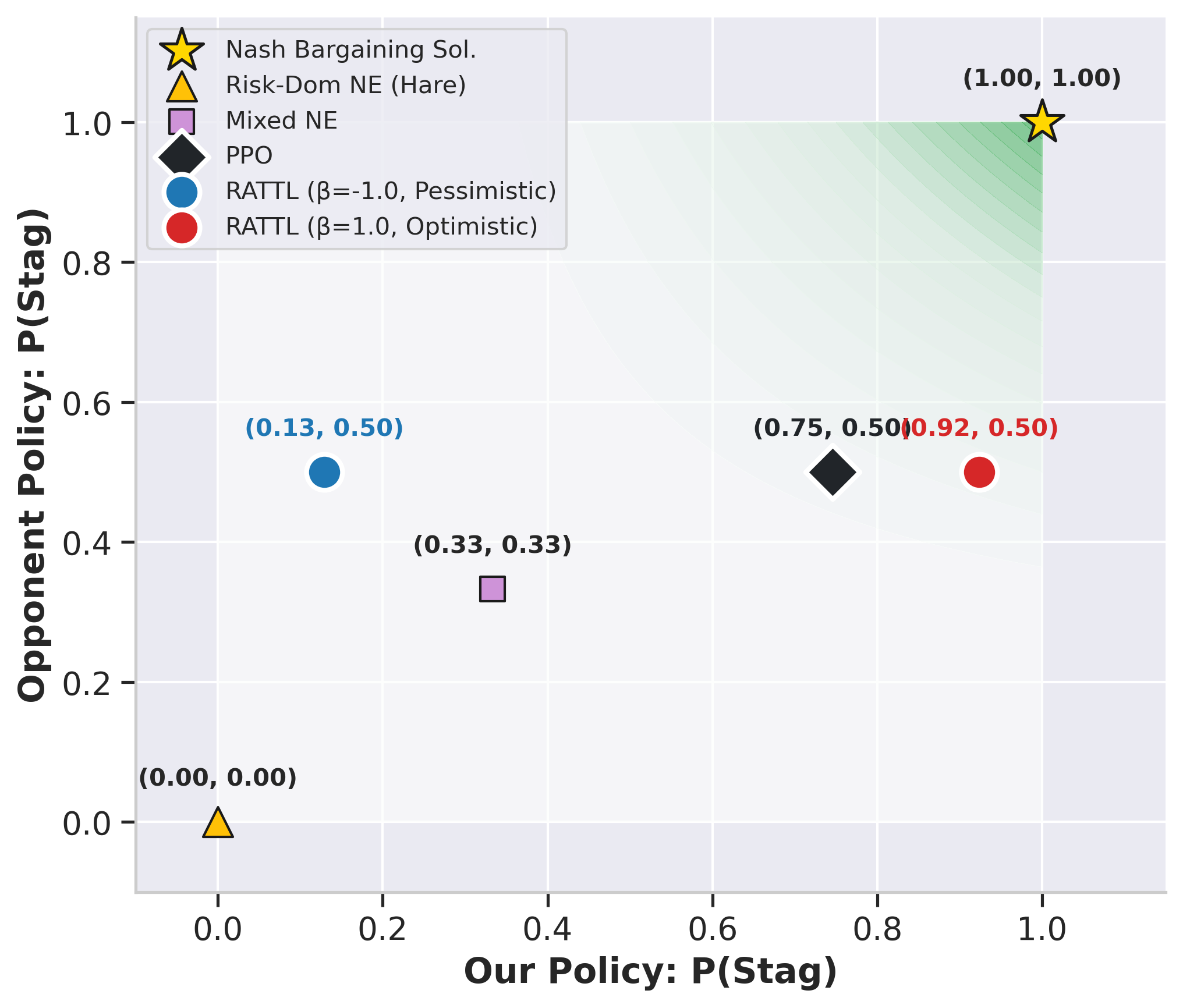}
        \caption{Outcome space (no reward noise)}
        \label{fig:vary_beta_outcome_space_rstd_0.0_oppstd_1.0}
     \end{subfigure}

     \vspace{0.3em}
     \begin{subfigure}[b]{0.48\textwidth}
         \centering
         \includegraphics[width=\textwidth]{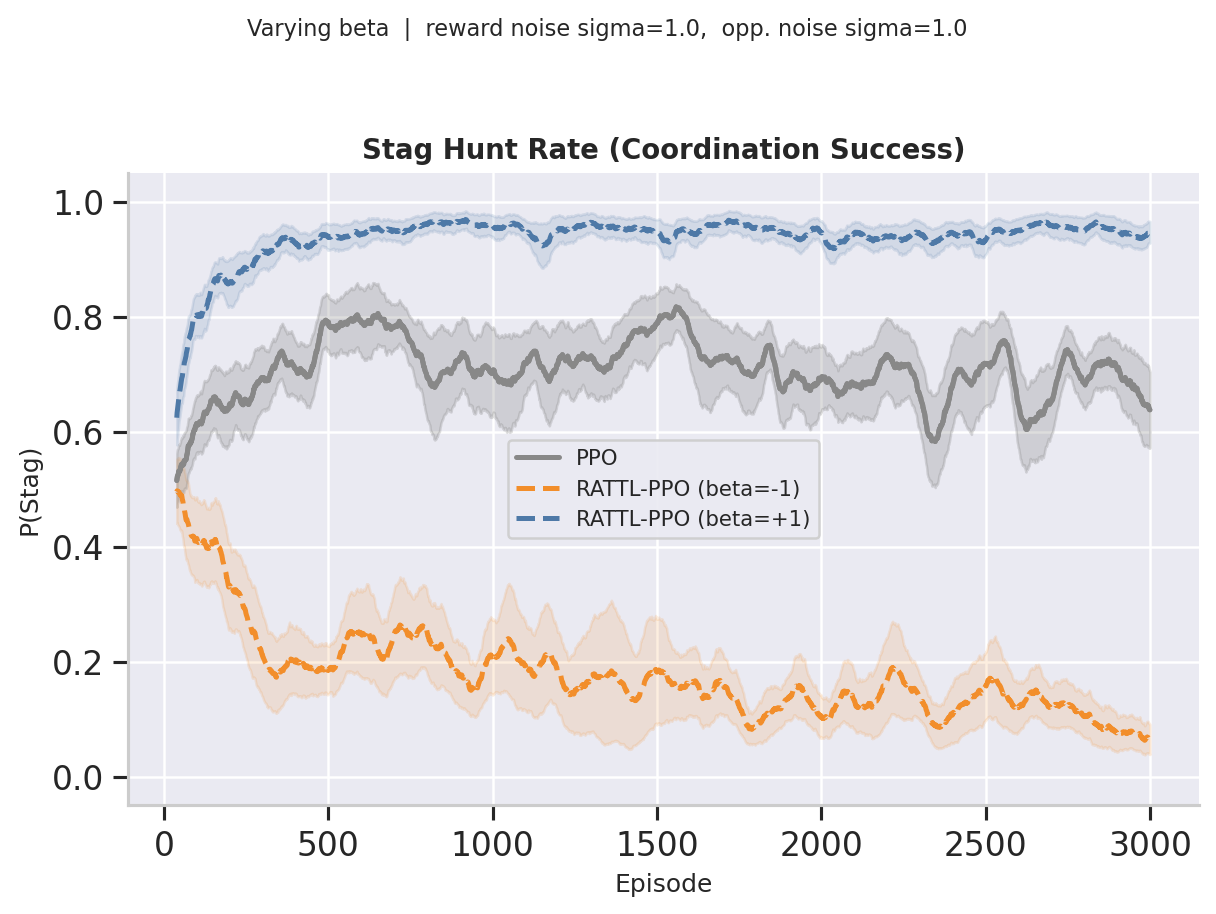}
         \caption{Cooperation rate (with reward noise)}
         \label{fig:vary_beta_rstd1.0_nstd1.0_stag}
     \end{subfigure}
     \hfill
     \begin{subfigure}[b]{0.48\textwidth}
         \centering
         \includegraphics[width=\textwidth]{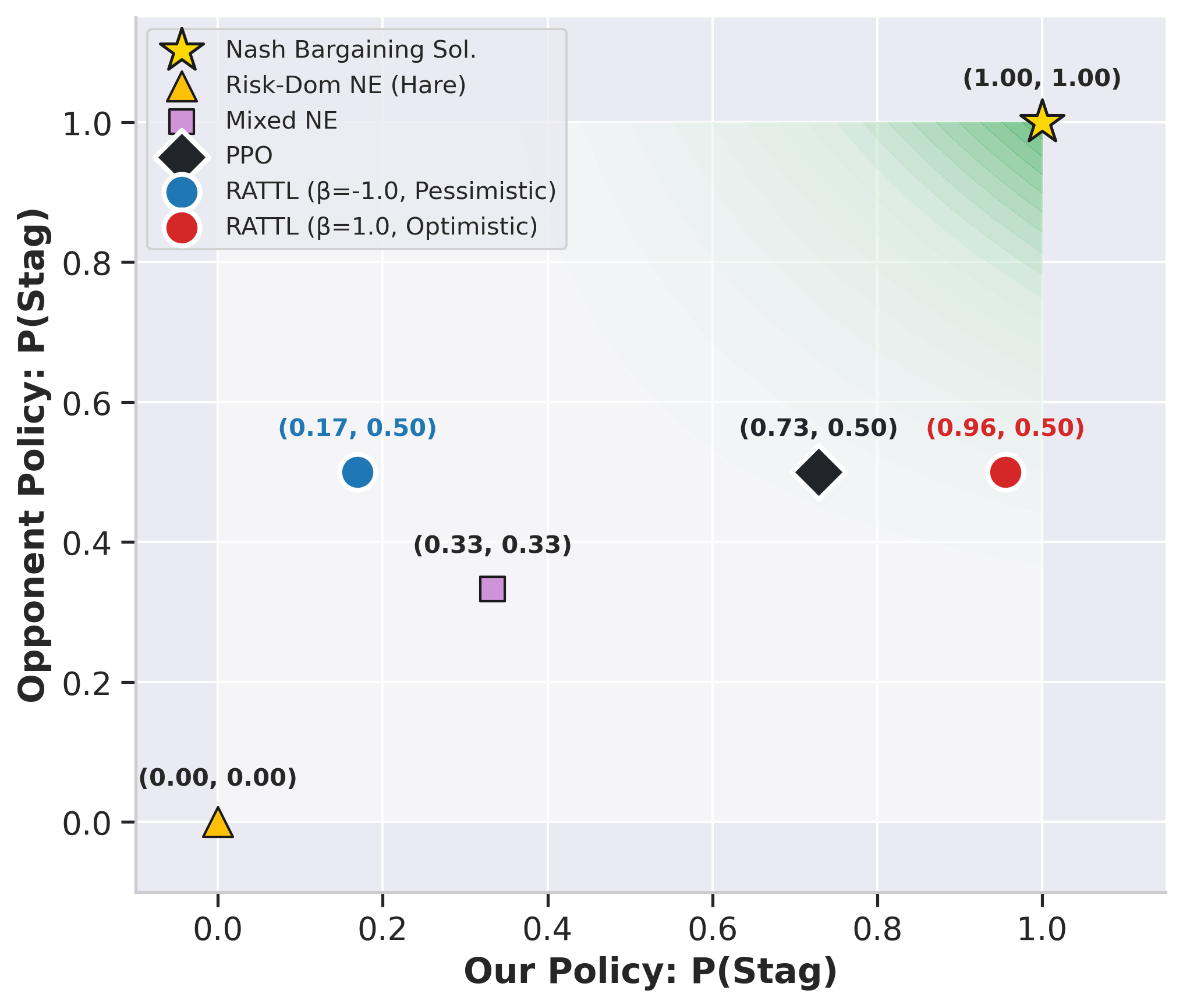}
        \caption{Outcome space (with reward noise)}
        \label{fig:vary_beta_outcome_space_rstd_1.0_oppstd_1.0}
     \end{subfigure}
     \caption{RATTL-PPO in the Iterated Stag Hunt. \textbf{Top row:} no reward noise. \textbf{Bottom row:} standard-normal reward perturbations. Risk-seeking RATTL ($\beta=1.0$) converges to stable cooperation near the NBS in both conditions; risk-averse ($\beta=-1.0$) defaults to Hare; vanilla PPO oscillates. Policies are projected against the Nash Bargaining Solution (NBS), risk-dominant NE, and mixed-strategy NE.}
     \label{fig:vary_beta_rstd0.0_oppstd_1.0}
\end{figure}

As shown in Figure~\ref{fig:vary_beta_rstd0.0_nstd1.0_stag}, risk-seeking RATTL-PPO ($\beta = 1.0$) establishes and maintains stable cooperation, while the risk-averse variant ($\beta = -1.0$) converges to Hare, and vanilla PPO settles on a volatile mixed strategy (${\approx}63\%$ Stag). Figure~\ref{fig:vary_beta_outcome_space_rstd_0.0_oppstd_1.0} projects policies into the outcome space against the Nash Bargaining Solution (NBS)---the unique Pareto-optimal point maximizing surplus gains over the disagreement point \citep{harsanyi1988general}. Risk-seeking RATTL reliably isolates the cooperative equilibrium near the NBS. The bottom row of Figure~\ref{fig:vary_beta_rstd0.0_oppstd_1.0} confirms that all configurations are robust to standard normal reward perturbations. Table~\ref{tab:ish_metrics} reports PoP and PoA under both conditions: risk-seeking RATTL ($\beta = 1.0$) dominates both metrics regardless of reward noise. Figure~\ref{fig:exp_criteria} (Appendix~\ref{app:extended}) further sweeps five risk profiles across stationary, mild ($20\%$), and high ($40\%$) partner noise: risk-aversion ($\beta\!\in\!\{1,2\}$) collapses sharply under high non-stationarity (final cooperation drops to ${\sim}0.5$), while risk-seeking and risk-neutral retain $>0.9$. The Pareto frontier (cooperation vs.\ stability) confirms that risk-seeking dominates under heavy noise---directly evidencing the EVaR Paradox (Proposition~\ref{prop:paradox}).

\begin{table}[t]
    \centering
    \caption{PoP and PoA in the Iterated Stag Hunt, with and without reward noise. Risk-seeking RATTL ($\beta=1.0$) dominates in both conditions.}
    \label{tab:ish_metrics}
    \begin{tabular}{@{}lcccc@{}}
        \toprule
        & \multicolumn{2}{c}{No noise} & \multicolumn{2}{c}{Reward noise} \\
        \cmidrule(lr){2-3}\cmidrule(lr){4-5}
        Algorithm & PoP$\uparrow$ & PoA$\downarrow$ & PoP$\uparrow$ & PoA$\downarrow$ \\
        \midrule
        RATTL ($\beta{=}{-}1$) & 1.27 & 4.23 & 1.34 & 3.96 \\
        RATTL ($\beta{=}1$) & \textbf{2.88} & \textbf{1.85} & \textbf{2.93} & \textbf{1.82} \\
        PPO & 2.51 & 2.13 & 2.49 & 2.14 \\
        \bottomrule
    \end{tabular}
\end{table}

\section{Conclusion}

We address the collapse of cooperation in non-stationary MARL driven by co-learning partner stochasticity. The EVaR Paradox (Proposition~\ref{prop:paradox}) shows that standard distributional robustness applied to returns widens the basin of instability; resolving this requires targeting gradient variance instead. This yields RATTL, whose adaptive trust factor $\trust(t)$ provably expands the cooperation basin and is unified under the Price of Paranoia framework. Empirically, risk-seeking RATTL achieves near-100\% cooperation retention where standard baselines collapse. Our framework demonstrates that stable cooperation requires neither opponent modelling nor prosocial priors---only calibrated uncertainty. Future directions include adaptive $\betat$ meta-learning, scalable deep RL implementations, and human-AI cooperation studies.

\subsubsection*{Acknowledgments}
This work has been supported by the German Research Foundation (DFG) through Research Training Group GRK 2428 ConVeY.

\bibliography{rattl_references}
\bibliographystyle{iclr2026_conference}

\appendix

\section{Extended Experimental Results}
\label{app:extended}

\begin{figure}[h]
    \centering
    \includegraphics[width=\linewidth]{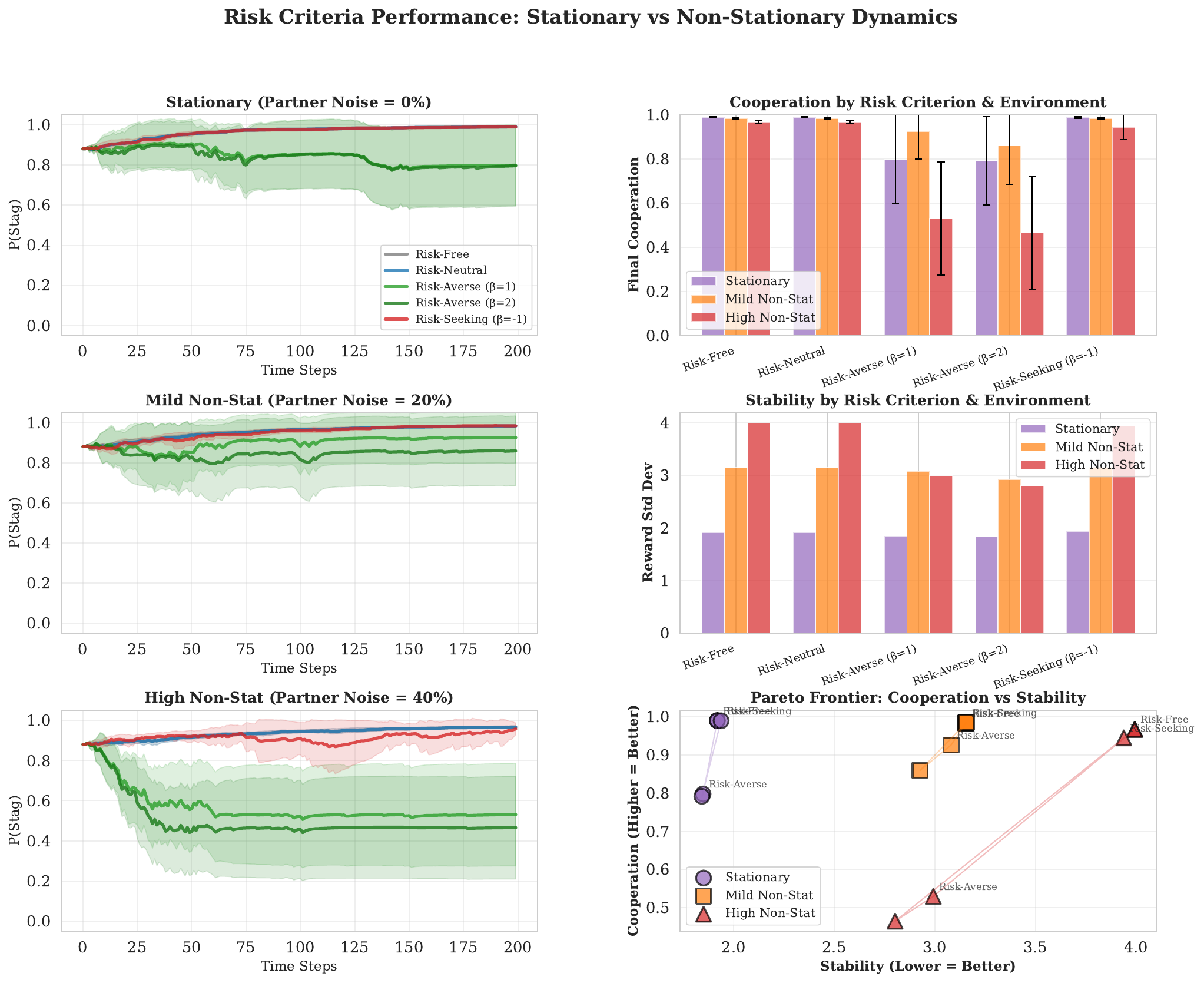}
    \caption{Risk-criteria performance across stationary and non-stationary partner dynamics. \textbf{Left column:} cooperation probability $P(\mathrm{Stag})$ over 200 episodes for five risk profiles (Risk-Free, Risk-Neutral, Risk-Averse $\beta\!\in\!\{1,2\}$, Risk-Seeking $\beta\!=\!-1$) under partner noise $\in\{0\%,20\%,40\%\}$. \textbf{Top-right:} final cooperation by criterion and environment; risk-aversion collapses sharply under high non-stationarity, while risk-seeking and risk-neutral retain $>0.9$. \textbf{Middle-right:} reward standard deviation. \textbf{Bottom-right:} Pareto frontier of cooperation vs.\ stability; risk-seeking dominates the upper-left under heavy noise. Empirically validates the EVaR Paradox (Proposition~\ref{prop:paradox}).}
    \label{fig:exp_criteria}
\end{figure}

We next inject standard normal noise ($\sigma \in \{0.5, 1.0\}$) directly into the opponent's mixed strategy, testing resilience to behavioral stochasticity beyond reward perturbations. All configurations show resilience to partner noise, though the nature of that resilience differs sharply across objectives.

\paragraph{Risk-seeking RATTL.} As shown in Figure~\ref{fig:vary_noise_beta1.0_rstd0.0_oppstd}, RATTL-PPO ($\beta = 1.0$) maintains cooperation (Stag rate $>92\%$) despite heavy partner noise (Table~\ref{tab:ish_metrics_opp_noise}: PoP peaks at 2.88, PoA drops to 1.85 at $\sigma=1.0$). The trust factor absorbs partner stochasticity, keeping the policy anchored near the NBS.

\paragraph{Risk-averse RATTL and PPO.} Risk-averse RATTL ($\beta = -1.0$, Figure~\ref{fig:vary_noise_beta-1.0_rstd0.0_oppstd}) collapses to Hare (Stag rate ${\approx}0.10$) across all noise levels (Table~\ref{tab:ish_metrics_opp_noise}). Vanilla PPO (Figure~\ref{fig:vary_noise_vanilla_rstd0.0_oppstd}, same table) converges to a volatile mixed strategy (${\approx}75\%$ Stag) but fails to commit to full cooperation.

\begin{figure}[h]
     \centering
     \begin{subfigure}[b]{0.48\textwidth}
         \centering
         \includegraphics[width=\textwidth]{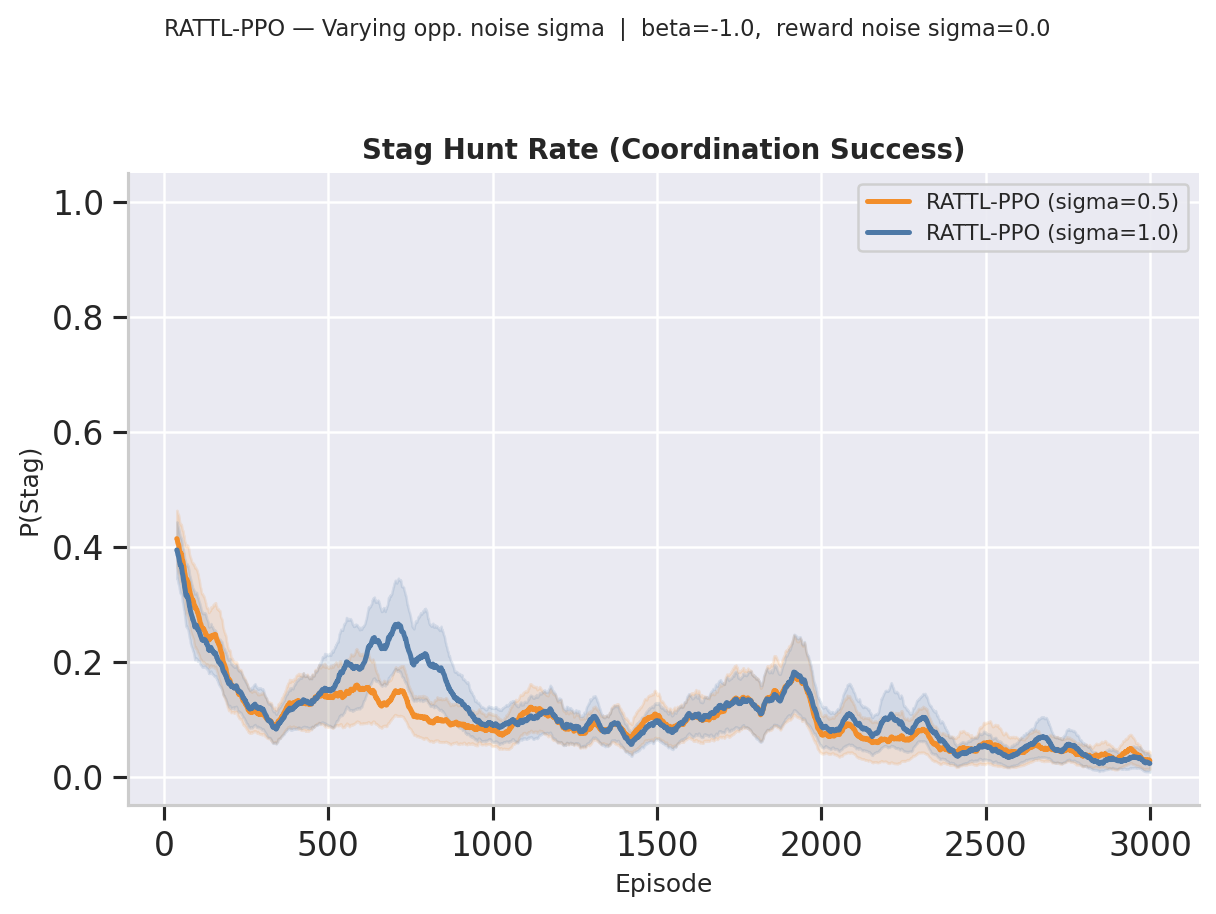}
         \caption{Attempt at coordination}
     \end{subfigure}
     \hfill
     \begin{subfigure}[b]{0.48\textwidth}
         \centering
         \includegraphics[width=\textwidth]{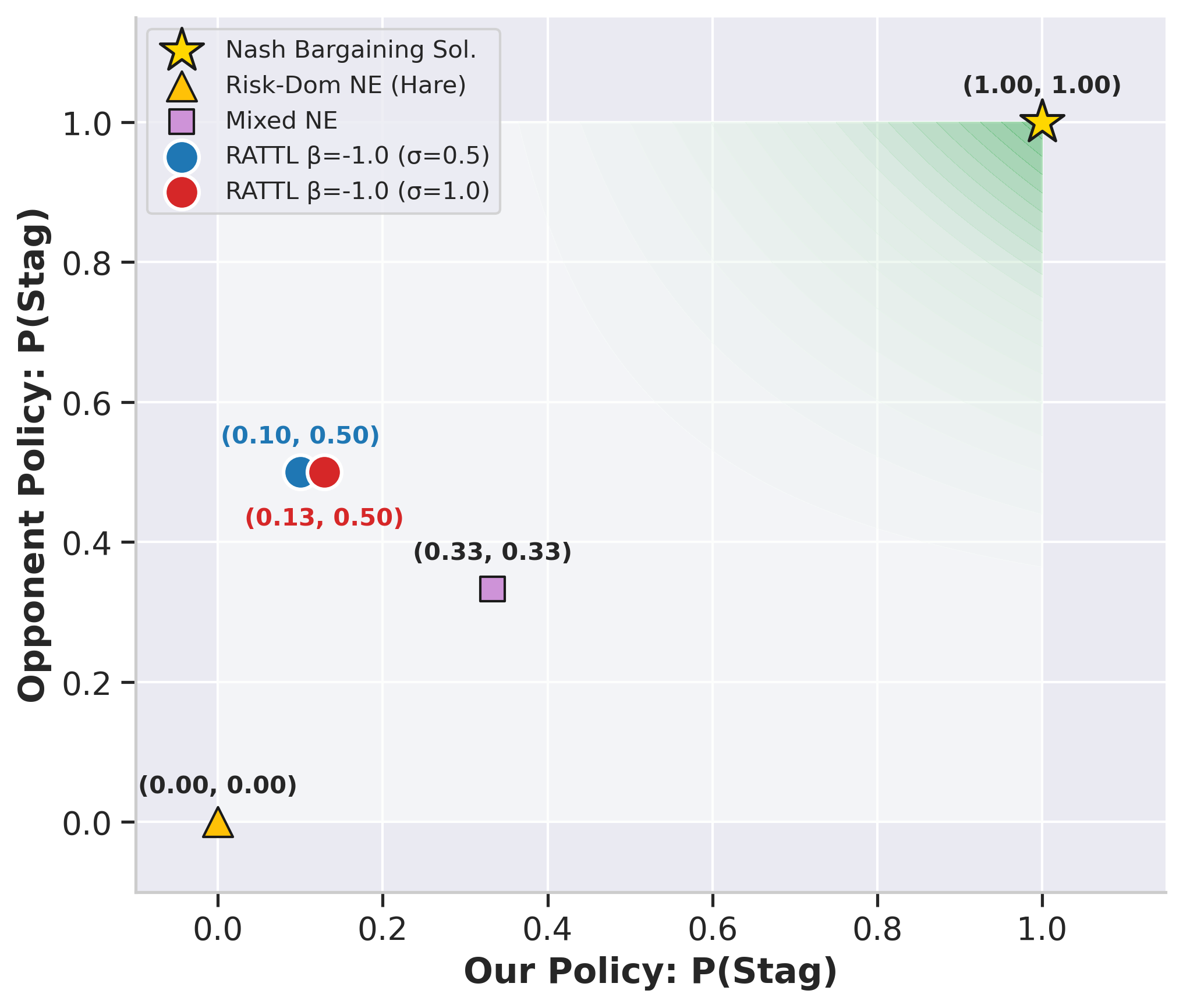}
         \caption{Mixed Strategy Nash Equilibria}
     \end{subfigure}
     \caption{Evaluation of RATTL-PPO ($\beta=-1.0$) against a stochastic opponent in the Iterated Stag Hunt where partner strategy is perturbed by standard-normal noise.}
     \label{fig:vary_noise_beta-1.0_rstd0.0_oppstd}
\end{figure}

\begin{figure}[h]
     \centering
     \begin{subfigure}[b]{0.48\textwidth}
         \centering
         \includegraphics[width=\textwidth]{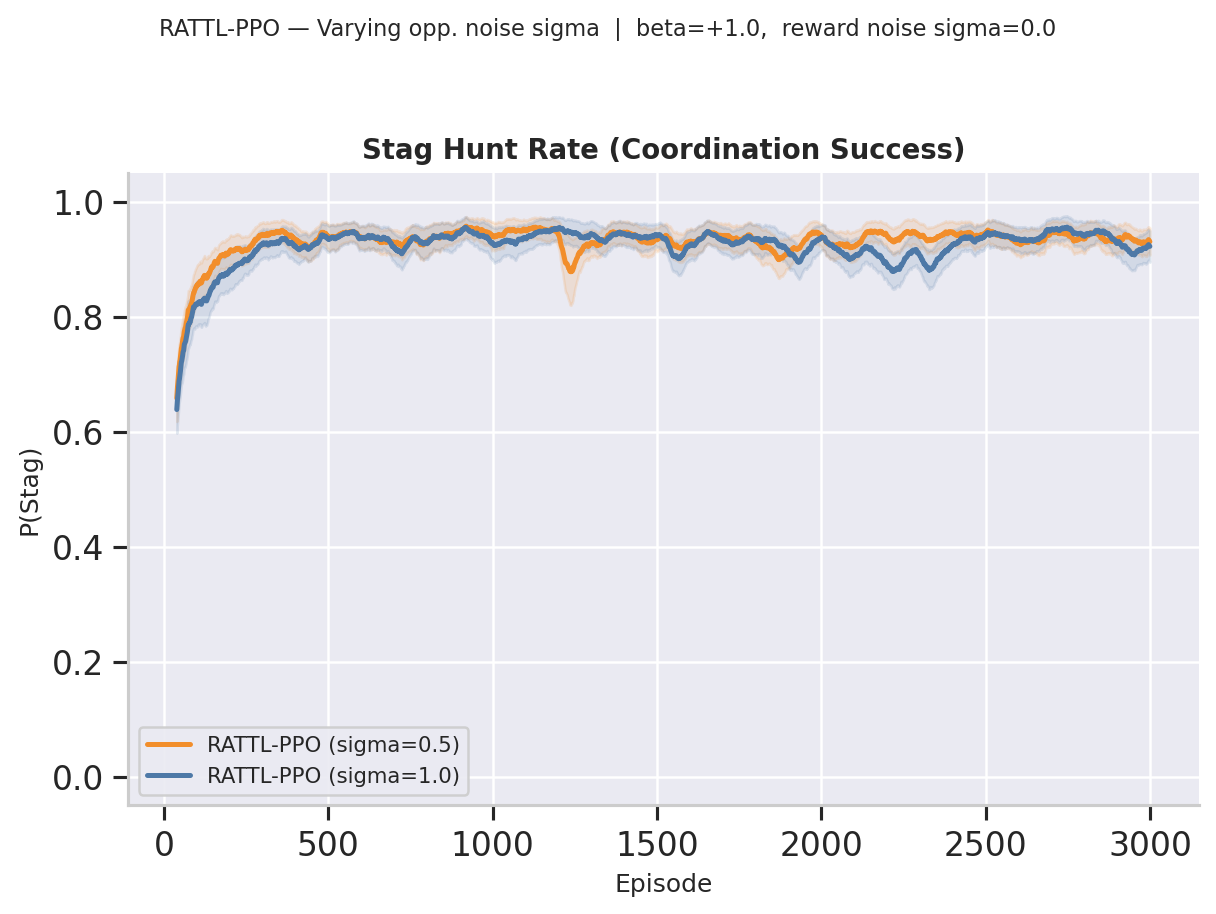}
         \caption{Attempt at coordination}
     \end{subfigure}
     \hfill
     \begin{subfigure}[b]{0.48\textwidth}
         \centering
         \includegraphics[width=\textwidth]{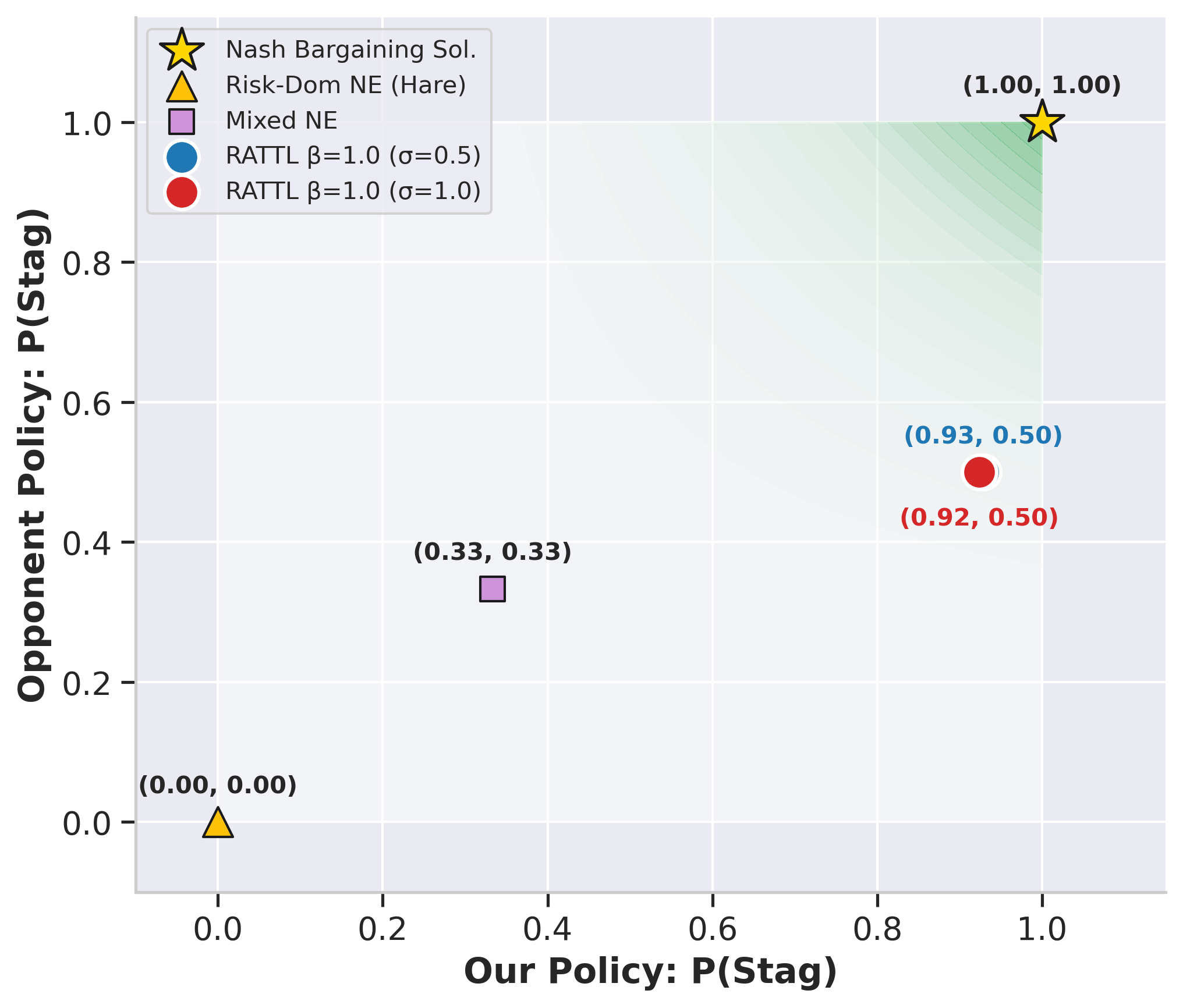}
         \caption{Mixed Strategy Nash Equilibria}
     \end{subfigure}
     \caption{Evaluation of RATTL-PPO ($\beta=1.0$) against a stochastic opponent in the Iterated Stag Hunt where partner strategy is perturbed by standard-normal noise.}
     \label{fig:vary_noise_beta1.0_rstd0.0_oppstd}
\end{figure}

\begin{figure}[h]
     \centering
     \begin{subfigure}[b]{0.48\textwidth}
         \centering
         \includegraphics[width=\textwidth]{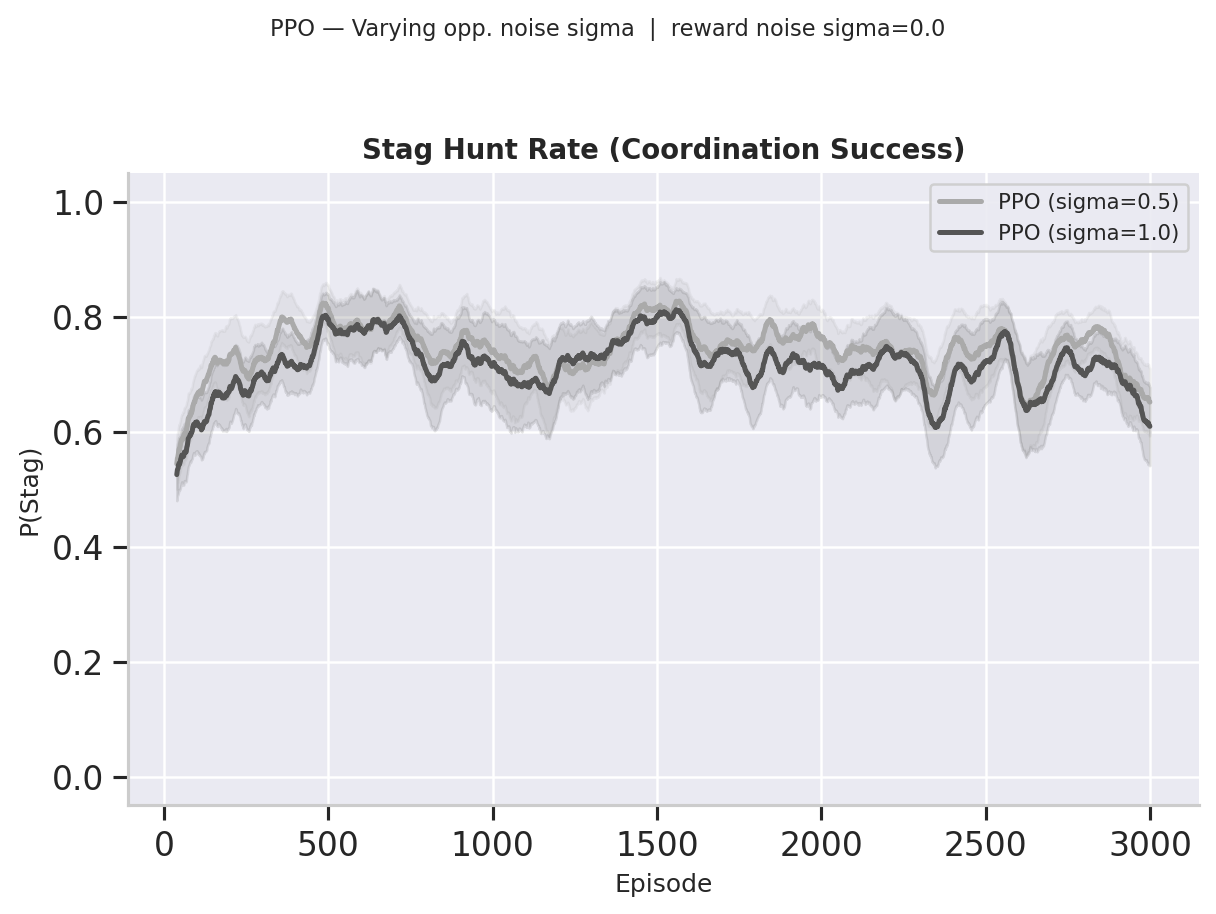}
         \caption{Attempt at coordination}
     \end{subfigure}
     \hfill
     \begin{subfigure}[b]{0.48\textwidth}
         \centering
         \includegraphics[width=\textwidth]{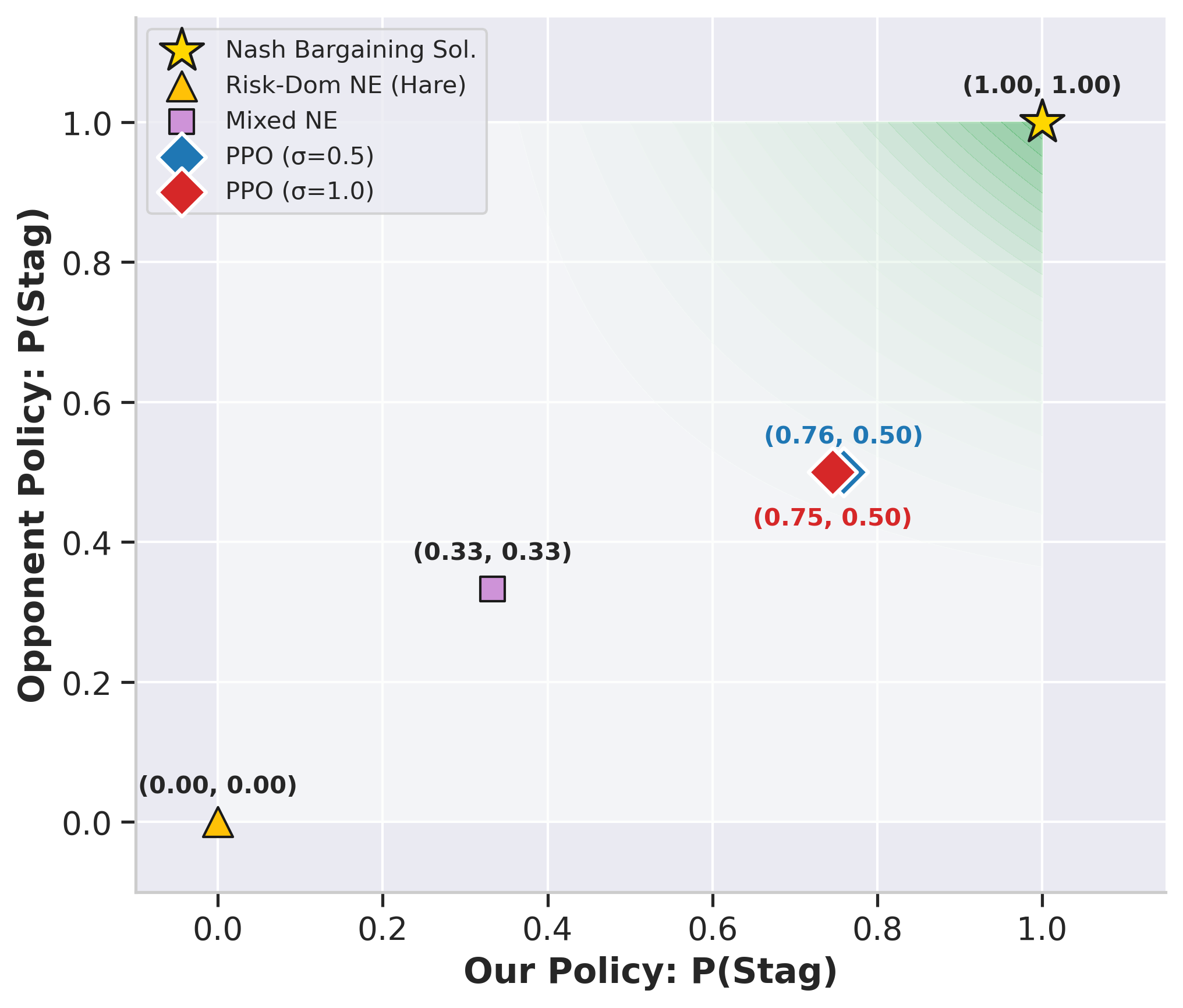}
         \caption{Mixed Strategy Nash Equilibria}
     \end{subfigure}
     \caption{Evaluation of vanilla PPO against a stochastic opponent in the Iterated Stag Hunt where partner strategy is perturbed by standard-normal noise.}
     \label{fig:vary_noise_vanilla_rstd0.0_oppstd}
\end{figure}

\begin{table}[h]
    \centering
    \caption{Empirical PoP and PoA across noise $\sigma$ added to the opponent's mixed strategy, for RATTL-PPO ($\beta=-1$), RATTL-PPO ($\beta=1$), and vanilla PPO.}
    \label{tab:ish_metrics_opp_noise}
    \begin{tabular}{@{}lcccccc@{}}
        \toprule
        & \multicolumn{2}{c}{RATTL ($\beta{=}{-}1$)} & \multicolumn{2}{c}{RATTL ($\beta{=}1$)} & \multicolumn{2}{c}{PPO} \\
        \cmidrule(lr){2-3}\cmidrule(lr){4-5}\cmidrule(lr){6-7}
        $\sigma$ & PoP$\uparrow$ & PoA$\downarrow$ & PoP$\uparrow$ & PoA$\downarrow$ & PoP$\uparrow$ & PoA$\downarrow$ \\
        \midrule
        0.5 & 1.16 & 4.59 & 2.86 & 1.86 & 2.53 & 2.11 \\
        1.0 & 1.27 & 4.23 & \textbf{2.88} & \textbf{1.85} & 2.51 & 2.13 \\
        \bottomrule
    \end{tabular}
\end{table}

\end{document}